\newtheorem{theorem}{Theorem}
\newtheorem{proposition}[theorem]{Proposition}
\newtheorem{lemma}[theorem]{Lemma}
\begin{document}
\onehalfspace

\title{Induced Subforests and Superforests}
\author{Dieter Rautenbach \and Florian Werner}
\date{}

\maketitle
\vspace{-10mm}
\begin{center}
Institute of Optimization and Operations Research, Ulm University, Ulm, Germany\\
\texttt{$\{$dieter.rautenbach,florian.werner$\}$@uni-ulm.de}
\end{center}

\begin{abstract}
Graph isomorphism, 
subgraph isomorphism, and
maximum common subgraphs
are classical well-investigated objects.
Their (parameterized) complexity and efficiently tractable cases have been studied.
In the present paper, for a given set of forests,
we study maximum common induced subforests and 
minimum common induced superforests.
We show that finding a maximum subforest is NP-hard already for two subdivided stars
while finding a minimum superforest is tractable for two trees
but NP-hard for three trees.
For a given set of $k$ trees, we present an efficient greedy
$\left(\frac{k}{2}-\frac{1}{2}+\frac{1}{k}\right)$-approximation algorithm 
for the minimum superforest problem.
Finally, we present a polynomial time approximation scheme 
for the maximum subforest problem for any given set of forests.\\[3mm]
{\bf Keywords}: Subgraph isomorphism; common subgraph
\end{abstract}

\section{Introduction}\label{sec1}

We consider finite, simple, and undirected graphs and all considered subgraphs are induced.
Let ${\cal G}$ be a set of graphs.
A {\it subgraph} of ${\cal G}$ is a graph $H$ such that, 
for every graph $G$ from ${\cal G}$,
the graph $G$ has an induced subgraph that is isomorphic to $H$.
A {\it supergraph} of ${\cal G}$ is a graph $H$ such that, 
for every graph $G$ from ${\cal G}$,
the graph $H$ has an induced subgraph that is isomorphic to $G$.
A subgraph that is a forest or tree, is called a {\it subforest} or {\it subtree}, respectively.
A supergraph that is a forest or tree, is called a {\it superforest} or {\it supertree}, respectively.

In this paper we consider the following natural optimization problems.

\medskip

\noindent {\sc Maximum Subforest}\\
\begin{tabular}{lp{14cm}}
Instance: & A set ${\cal F}$ of forests.\\
Task: & Determine a subforest $F$ of ${\cal F}$ of maximum order.
\end{tabular}

\medskip

\medskip

\noindent {\sc Minimum Superforest}\\
\begin{tabular}{lp{14cm}}
Instance: & A set ${\cal F}$ of forests.\\
Task: & Determine a superforest $F$ of ${\cal F}$ of minimum order.
\end{tabular}

\medskip

\medskip

Both problems are already NP-hard restricted to instances ${\cal F}=\{ F_1,F_2\}$,
where $F_1$ and $F_2$ are unions of paths: 
Let $I$ be an instance of the strongly NP-complete problem {\sc 3-partition}, 
cf.~[SP15] in~\cite{gajo}.
Let $I$ consist of $3m$ positive integers $a_1,\ldots,a_{3m}$ 
with $A/4<a_i<A/2$ for each $i\in [3m]$,
where $A=\frac{1}{m}(a_1+\cdots+a_{3m})$.
The task for $I$ is to decide whether there is a partition of $[3m]$
into $m$ sets $I_1,\ldots,I_m$ each containing exactly three elements
such that $\sum_{j\in I_i}a_j=A$ for each $i\in [m]$.
Let $F_1$ be the forest with $3m$ components that are paths of order $a_1,\ldots,a_{3m}$
and let $F_2$ be the forest with $m$ components that are paths of order $A+2$.
Note that $n(F_2)=n(F_1)+2m$.

\pagebreak

Obviously, the following statements are equivalent:
\begin{enumerate}[(i)]
\item $I$ is a yes-instance of {\sc 3-partition}.
\item $F_1$ is isomorphic to an induced subtree of $F_2$.
\item $F_1$ is a maximum subforest of $\{F_1,F_2\}$.
\item $F_2$ is a minimum superforest of $\{F_1,F_2\}$.
\end{enumerate}
These equivalences imply the stated hardness of {\sc Maximum Subforest} 
and {\sc Minimum Superforest}.
They also show that these problems are closely related to the very well-studied 
{subtree}/{subgraph isomorphism} problem~\cite{ma,abba,boha,heva,mapi}.
Maximum common (induced and non-induced) subgraphs
were first studied by Bokhari~\cite{bo} in the context of array processing
and are applied in areas 
ranging from molecular chemistry~\cite{rawi} to pattern matching~\cite{shbuve}.
The maximum common connected induced subgraph problem 
was shown to be NP-hard 
for $3$-outerplanar labeled graphs 
of maximum degree and treewidth at most $4$~\cite{akmeta,akta}
and for two biconnected series-parallel graphs~\cite{krkumu}.
It can be solved efficiently~\cite{yaaoma} 
for a degree-bounded partial $k$-tree and a connected graph, 
whose number of spanning trees is polynomial.
For the maximum common induced subgraph problem 
the parameterized complexity is studied in \cite{ab,abbosi}.

Modifying the above NP-hardness comments similarly as in~\cite{grrawo} yields the following.

\begin{proposition}\label{proposition1}
{\sc Maximum Subforest} restricted to instances $\{ T_1,T_2\}$ consisting of two subdivided stars is NP-hard.
\end{proposition}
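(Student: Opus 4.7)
The plan is to reduce from the strongly NP-complete problem {\sc 3-partition} as in the preceding discussion, but replacing the two disjoint unions of paths by two subdivided stars. Given an instance $I$ with $a_1,\ldots,a_{3m}$ satisfying $A/4<a_i<A/2$ and $\sum_i a_i=mA$, I would let $T_1$ be the subdivided star with center $c_1$ and $3m$ legs of orders $a_1,\ldots,a_{3m}$, and let $T_2$ be the subdivided star with center $c_2$ and $m$ legs of order $A+2$. The reduction would then show that $\{T_1,T_2\}$ admits a subforest $F$ of order at least $mA$ if and only if $I$ is a yes-instance.

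The forward direction is easy: given a 3-partition $I_1,\ldots,I_m$, the forest $F:=T_1-c_1$ has order $mA$, trivially embeds into $T_1$, and embeds into $T_2$ by packing the three paths indexed by $I_j$ into the $j$-th leg of $T_2$ (the three orders sum to $A$ and two gap-vertices fill the $A+2$ slots exactly). For the converse, suppose $F$ is a subforest with $n(F)\geq mA$. Since both $T_1$ and $T_2$ contain a unique vertex of degree at least $3$, every induced subforest has at most one such vertex, so $F$ is either a disjoint union of paths or a subdivided star with additional disjoint path components. I split into two cases. In the \emph{main case}, $F$ is a disjoint union of paths whose embedding into $T_1$ avoids $c_1$; then $F$ packs into the $3m$ legs of $T_1$ independently, and the packing inequality $L_i+(s_i-1)\leq a_i$ per leg shows that $\sum_j p_j\geq mA$ forces $\sum_j p_j=mA$ together with exactly one $F$-path of order $a_i$ per leg of $T_1$, so the multiset of $F$-path orders equals $\{a_1,\ldots,a_{3m}\}$. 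Applying the same packing analysis to $T_2$'s $m$ legs of order $A+2$, the restriction $A/4<a_i<A/2$ then forces exactly three paths per leg of $T_2$ with orders summing to $A$, yielding a 3-partition of $I$.

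The main obstacle is the remaining \emph{center case}, in which either $F$ has a vertex $v$ of degree at least $3$ (then $v$ must map to $c_1$ in $T_1$ and to $c_2$ in $T_2$, forcing $\deg_F(v)\leq m$), or $F$ is a union of paths whose $T_1$-embedding uses $c_1$ (then the preimage of $c_1$ has degree at most $2$ in $F$). In either sub-case the crucial observation is that every leg of $T_1$ whose first vertex is \emph{not} adjacent to $c_1$ in $T_1[V(F)]$ must have its first vertex excluded from $V(F)$; otherwise a forbidden edge to $c_1$ would appear in the induced subgraph. Consequently the number of excluded vertices among the legs of $T_1$ is at least $3m-\deg_F(\phi_1^{-1}(c_1))$, which is at least $2m$ in the first sub-case and at least $3m-2$ in the second. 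Either bound gives $n(F)\leq 1+mA-2m<mA$ for $m\geq 2$, contradicting $n(F)\geq mA$. Combining the cases establishes the equivalence with {\sc 3-partition} and hence NP-hardness.
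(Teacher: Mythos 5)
Your construction and overall strategy coincide with the paper's: you build the same two subdivided stars from the {\sc 3-partition} instance, use the same threshold $mA$, and split the converse direction according to whether a center vertex is involved. The forward direction and the ``center case'' bounds for the $T_1$-side are fine, and your packing inequality $L_i+(s_i-1)\le a_i$ correctly forces $F$ to consist of exactly one path of order $a_i$ per leg of $T_1$.

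There is, however, a genuine gap in the ``main case.'' You define the main case purely by conditions on $F$ and on its $T_1$-embedding ($F$ is a union of paths whose $T_1$-embedding avoids $c_1$), and then ``apply the same packing analysis to $T_2$'s $m$ legs.'' That step silently assumes the $T_2$-embedding of $F$ also avoids $c_2$. Nothing you wrote rules out the sub-case where $F$ is a union of paths, the $T_1$-embedding avoids $c_1$, but the $T_2$-embedding uses $c_2$; this sub-case is not covered by either branch of your ``center case,'' both of which only constrain the $T_1$-embedding. Moreover, a naive transplant of your center-case counting to $T_2$ does not close it: the center $c_2$ has degree only $m$, so excluding at least $m-2$ first-vertices of legs gives $n(F)\le n(T_2)-(m-2)=mA+m+3$, which is not below $mA$. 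To make the main case work you must use the structure of $F$ already established from the $T_1$-side (it consists of $3m$ paths of orders $a_1,\dots,a_{3m}$, each strictly below $A/2$) and count more carefully: if $c_2$ lies in the image, at least $m-2$ legs lose their first vertex and, because every $F$-path has order below $A/2$, each of those legs in fact loses at least three vertices, giving $n(F)\le n(T_2)-3(m-2)<mA$ for $m$ large enough (the paper assumes $m\ge 8$ and argues exactly along these lines). Alternatively, one can redo the slot-counting in $T_2$ accounting for the one $F$-path straddling $c_2$ and reach a contradiction with $m\ge 3$. Either way, the sub-case must be handled explicitly; as written, the proof has a hole.
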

Note that all proofs are postponed to Section \ref{sec2}.

If the set ${\cal F}$ contains only trees and $F$ is a minimum superforest of ${\cal F}$,
then each copy of a tree from ${\cal F}$ is completely contained in one component of $F$.
If $F$ would not be connected, then selecting one vertex from each component of $F$ and 
identifying all selected vertices to a single vertex
would yield a strictly smaller superforest of ${\cal F}$.
This argument implies the following. 
\begin{eqnarray}\label{e1}
\mbox{\it Every minimum superforest of a set of trees is a tree.}
\end{eqnarray}
For two trees $T_1$ and $T_2$,
minimum supertree $T_{\cup}$ of $\{ T_1,T_2\}$, and
a maximum subtree $T_{\cap}$ of $\{ T_1,T_2\}$,
the following inclusion-exclusion formula 
concerning the orders of these trees 
is straightforward.
\begin{eqnarray}\label{e2}
n(T_{\cup})=n(T_1)+n(T_2)-n(T_{\cap}).
\end{eqnarray}
Furthermore, given subtrees of $T_1$ and $T_2$ isomorphic to $T_{\cap}$,
a minimum superforest of $\{ T_1,T_2\}$ can easily be constructed 
by extending the copy of $T_{\cap}$ within $T_1$ 
by adding $n(T_2)-n(T_\cap)$ new vertices and suitable edges
creating a copy of $T_2$.
Refering to Edmonds and Matula, 
Akutsu~\cite{ak} showed that, for two given trees $T_1$ and $T_2$, 
some maximum subtree of $\{ T_1,T_2\}$ can be determined efficiently 
combining a weighted bipartite matching algorithm with dynamic programming.

Together our comments imply the following.

\begin{proposition}\label{proposition2}
{\sc Minimum Superforest} restricted to instances $\{ T_1,T_2\}$ consisting of two trees 
can be solved in polynomial time.
\end{proposition}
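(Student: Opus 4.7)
The plan is to chain together the three ingredients already assembled in the excerpt into a polynomial-time algorithm. By (\ref{e1}), any minimum superforest of $\{T_1,T_2\}$ is a tree $T_\cup$, so the problem reduces to finding a minimum supertree. By (\ref{e2}), which I will justify below, $n(T_\cup)=n(T_1)+n(T_2)-n(T_\cap)$, where $T_\cap$ is a maximum subtree of $\{T_1,T_2\}$, so minimizing $n(T_\cup)$ is equivalent to maximizing $n(T_\cap)$. Akutsu's algorithm cited in the excerpt produces, in polynomial time, a maximum subtree $T_\cap$ together with induced embeddings into $T_1$ and $T_2$. Given these data, the construction sketched in the excerpt — take $T_1$, attach $n(T_2)-n(T_\cap)$ fresh vertices to the embedded copy of $T_\cap$ inside $T_1$ with edges dictated by the attachment pattern inside $T_2$ — yields a tree of order $n(T_1)+n(T_2)-n(T_\cap)$ that contains induced copies of both $T_1$ and $T_2$, and is thus a minimum supertree. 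All steps run in polynomial time.

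The one step that deserves verification is the identity (\ref{e2}). The inequality $n(T_\cup)\le n(T_1)+n(T_2)-n(T_\cap)$ is witnessed by the explicit construction above. For the reverse inequality, fix any supertree $T_\cup$ of $\{T_1,T_2\}$ that is a tree, together with induced embeddings $\iota_1(T_1),\iota_2(T_2)\subseteq T_\cup$. If the vertex sets $\iota_1(V(T_1))$ and $\iota_2(V(T_2))$ are disjoint, then since $T_\cup$ is connected we get $n(T_\cup)\ge n(T_1)+n(T_2)>n(T_1)+n(T_2)-n(T_\cap)$, using $n(T_\cap)\ge 1$. Otherwise, $\iota_1(T_1)$ and $\iota_2(T_2)$ are two subtrees of the tree $T_\cup$ whose vertex sets meet, and by the Helly property of subtrees of a tree their vertex intersection induces a connected subtree of $T_\cup$; pulling it back through $\iota_1^{-1}$ and $\iota_2^{-1}$ produces a common induced subtree of $\{T_1,T_2\}$, so its order is at most $n(T_\cap)$. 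Inclusion-exclusion on the vertex sets then gives
$$n(T_\cup)\ge |\iota_1(V(T_1))\cup \iota_2(V(T_2))|=n(T_1)+n(T_2)-|\iota_1(V(T_1))\cap \iota_2(V(T_2))|\ge n(T_1)+n(T_2)-n(T_\cap),$$
establishing (\ref{e2}).

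The proof is essentially routine once Akutsu's subtree-isomorphism procedure is available; no substantial obstacle arises. The only point demanding a small technical remark is the Helly-type argument for subtrees of a tree, which could either be cited or verified in one line by observing that any two vertices in $\iota_1(V(T_1))\cap \iota_2(V(T_2))$ are joined in $T_\cup$ by a unique path, and this path must lie inside each of the two connected induced subgraphs $\iota_1(T_1)$ and $\iota_2(T_2)$, hence inside their intersection.
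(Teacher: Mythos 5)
Your proposal follows exactly the route the paper takes: reduce to supertrees via (\ref{e1}), use (\ref{e2}) to convert minimum supertree into maximum subtree, apply Akutsu's polynomial-time maximum common subtree algorithm, and recover the supertree by the gluing construction. Your added verification of (\ref{e2}) — in particular noting that the intersection of two subtrees of the tree $T_\cup$ is again a subtree, hence pulls back to a common induced subtree of $T_1$ and $T_2$ of order at most $n(T_\cap)$ — correctly fills in the ``straightforward'' claim the paper leaves implicit.
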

In \cite{ak} Akutsu also showed that it is NP-hard to determine
a maximum subtree of three given trees.
Reflecting this result, we show the following,
which does not follows from Akutsu's result.

\begin{theorem}\label{theorem1}
{\sc Minimum Superforest} restricted to instances $\{ T_1,T_2,T_3\}$ consisting of three trees is NP-hard.
\end{theorem}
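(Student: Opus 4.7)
The plan is to prove Theorem~\ref{theorem1} by a polynomial-time reduction from the strongly NP-complete problem 3-\textsc{Partition}, mirroring the two-forest hardness argument from the introduction but working within the tree category. Given $3m$ positive integers $a_1,\ldots,a_{3m}$ summing to $mA$ with $A/4 < a_i < A/2$, the aim is to construct three trees $T_1$, $T_2$, $T_3$ together with a bound $k$ so that $\{T_1,T_2,T_3\}$ admits a superforest of order at most $k$ if and only if the 3-\textsc{Partition} instance is feasible.

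The natural starting point is to form $T_1$ by attaching $3m$ paths of orders $a_1,\ldots,a_{3m}$ to a common central vertex and to form $T_2$ by attaching $m$ paths of order $A+2$ to a common central vertex, yielding subdivided stars analogous to the forests $F_1$ and $F_2$ from the introduction. By Proposition~\ref{proposition2}, however, the minimum superforest of $\{T_1,T_2\}$ alone is polynomially computable, and a short direct analysis shows that no $3$-partition obstruction arises for just these two trees: an optimum supertree is obtained simply by matching the $m$ longest legs of $T_1$ into the $m$ legs of $T_2$. The role of $T_3$ is therefore to recapture the ``disconnectedness'' that the original forest reduction exploited, forcing any common supertree to realise each leg of $T_2$ as containing exactly three vertex-disjoint induced copies of $T_1$-legs separated by single vertices. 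Concretely, I would build $T_3$ from a copy of $T_2$ equipped with small distinctive pendant markers planted at the interior positions of each leg where separators must fall, and I would correspondingly decorate the leaves of $T_1$'s legs with matching pendants, so that the only simultaneous embeddings into a minimum common supertree are those aligning the markers of $T_3$ with the leaf-ends of $T_1$-legs packed three per $T_2$-leg.

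Given the construction, both directions are routine bookkeeping. A valid $3$-partition $I_1,\ldots,I_m$ yields a supertree of the target order $k$ by placing the three $T_1$-legs indexed by $I_i$ consecutively along the $i$-th leg of $T_2$, separated by single vertices matching the markers of $T_3$. Conversely, any supertree of order at most $k$ must simultaneously embed $T_1$, $T_2$, and the marker pattern of $T_3$, and decoding the marker alignments then yields a triple partition with each triple summing to exactly $A$. The main obstacle is engineering $T_3$ (together with the matching decorations on $T_1$) so that the markers are rigid enough to exclude parasitic embeddings---which would otherwise permit a supertree of order at most $k$ without a valid $3$-partition---yet light enough that the markers themselves do not inflate $k$ beyond what the yes-instance achieves. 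This balancing act is precisely the step where the reduction exploits the availability of a third tree, unavailable in the Proposition~\ref{proposition2} setting.
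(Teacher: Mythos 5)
Your proposal identifies the correct high-level tension (two trees are tractable by Proposition~\ref{proposition2}, so a third tree must reintroduce hardness), but it stops short of an actual construction, and the route you sketch has concrete obstacles that you have not resolved.

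First, your central gadget is underspecified in a way that is not cosmetic. You propose placing pendant markers on $T_3$ ``at the interior positions of each leg where separators must fall,'' but those positions are precisely what a solution to the 3-\textsc{Partition} instance determines; a fixed tree $T_3$ cannot encode positions that depend on the unknown solution. Either the markers sit at all candidate positions (in which case it is unclear why $T_3$ constrains the supertree at all, since everything is marked) or they sit at specific positions (in which case you have encoded one particular partition into $T_3$ rather than reduced the problem). You acknowledge this as ``the main obstacle'' and ``a balancing act'', which is exactly the step that would constitute the proof; without it, there is no reduction.

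Second, turning $F_1$ into a subdivided star $T_1$ changes the embedding geometry in a way your sketch does not account for. The center of $T_1$ has degree $3m$, so any supertree must contain a vertex of degree at least $3m$, whereas $T_2$'s center has degree only $m$. In the forest reduction, the $3m$ paths could be scattered independently three-per-leg; once they all emanate from one central vertex of $T_1$, only $m$ of them can leave that vertex into the $m$ legs of $T_2$, and the remaining $2m$ legs of $T_1$ must branch off somewhere else. You note correctly that this kills the 3-\textsc{Partition} structure for $\{T_1,T_2\}$, but adding $T_3$ does not remove the degree constraint: the supertree still must host the degree-$3m$ center, and it is not shown how the markers interact with that.

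For comparison, the paper avoids both problems by reducing from 3-\textsc{Dimensional Matching} rather than 3-\textsc{Partition}, and by building three trees $T_x$, $T_y$, $T_z$ with parallel shapes: each has a degree-$q$ root whose children lead to gadget trees built on triples of paths of length $2q$, with pendants at positions encoding, for each element, which triples it belongs to. Because all three trees share the same rooted skeleton, the degree constraints align at the root, and the yes/no distinction is captured purely by a budget of $q$ extra vertices that must be spent as one pendant per relevant branch. Your proposal does not reach a comparable construction, so as written it is an outline of intent rather than a proof.
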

For instances of bounded maximum degree, the problem can be solved efficiently.

\begin{theorem}\label{theorem5}
For every $\Delta\in \mathbb{N}$, 
there is some $p\in \mathbb{N}$ with the following property:
For a given set ${\cal T}=\{ T_1,T_2,T_3\}$ consisting of $3$ trees 
of order at most $n$ and maximum degree at most $\Delta$,
one can determine in time $O(n^p)$ 
a minimum superforest $T$ of ${\cal T}$.
\end{theorem}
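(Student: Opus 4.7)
The plan is to solve the problem by a dynamic program whose state space consists of triples of ``rooted fragments'' of the three input trees, and whose polynomial running time rests on a degree bound for a hypothetical minimum supertree.

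First I would show that any minimum superforest $T$ of $\{T_1, T_2, T_3\}$ (a tree by~\eqref{e1}) satisfies $n(T) \le 3n$ and $\Delta(T) \le 3\Delta$, taking $f_i : V(T_i) \to V(T)$ to be the induced embeddings witnessing that $T$ is a supertree. The key observation is that if distinct neighbours $x, x'$ of some $v \in V(T)$ both lie in $f_j(V(T_j))$ while $v \notin f_j(V(T_j))$, then the (unique) $T$-path $x{-}v{-}x'$ must be the image of the $T_j$-path between $f_j^{-1}(x)$ and $f_j^{-1}(x')$, forcing $v \in f_j(V(T_j))$, a contradiction. Hence for each $j$ with $v \notin f_j(V(T_j))$ at most one neighbour of $v$ lies in $f_j(V(T_j))$, whereas for each $j$ with $v \in f_j(V(T_j))$ at most $\Delta$ do; summing over $j$ gives $\deg_T(v) \le 3\Delta$. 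The same observation also shows that any ``dummy'' vertex $v \notin \bigcup_i f_i(V(T_i))$ of degree $d \ge 2$ can be deleted and its neighbours $u_1, \ldots, u_d$ star-reconnected by the edges $u_1 u_2, \ldots, u_1 u_d$ without creating any spurious induced edge in any $f_j$-image, contradicting minimality; consequently every vertex of $T$ lies in some image, yielding $n(T) \le n(T_1) + n(T_2) + n(T_3) \le 3n$.

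Next I would root $T$ at an arbitrary vertex and describe, for each $v \in V(T)$, the restriction of each $f_i$ to the rooted subtree $T[v]$ by a local state $s_i$ drawn from an $O(n)$-size alphabet: either \emph{empty}; or a \emph{rooted branch} of $T_i$, parametrised by a vertex $w \in V(T_i)$ and an optional parent-direction $y \in N_{T_i}(w) \cup \{\bot\}$ (the subtree of $T_i$ obtained by removing the edge $wy$ and keeping the component of $w$ when $y \in N_{T_i}(w)$, or the whole $T_i$ when $y = \bot$, rooted at $w$, with $w \mapsto v$); or \emph{full}, meaning all of $T_i$ is embedded in the strict descendants of $v$. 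The table $F(s_1, s_2, s_3)$, with $O(n^3)$ entries, stores the minimum size of a rooted tree realising $(s_1, s_2, s_3)$ at its root. The recursion at $v$ distributes the ``child-demands'' of each $s_i$ among the children of $v$: each non-empty, non-full $s_i$ dictates at most $\Delta$ specific sub-states to be placed injectively among distinct children of $v$; each full $s_i$ chooses one child to inherit $T_i$ in one of $O(n)$ sub-states (either full again or a fresh $(w', \bot)$-rooted variant). Since every child of $v$ must carry at least one non-empty demand (otherwise its subtree is removable by minimality), $v$ has at most $3\Delta$ children and the number of valid distributions is bounded by a constant depending on $\Delta$; the $O(n^3)$-factor in each entry's computation comes purely from the at most three ``full'' sub-state choices. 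Filling $F$ thus takes $O(n^6)$ time, and the minimum supertree has size $\min F(s_1, s_2, s_3)$ over the root-states where every $s_i$ represents ``$T_i$ fully placed at or below the root''.

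The main technical obstacle is correctly handling the ``full'' state, since it records only \emph{that} a tree has been placed somewhere strictly below $v$ without pinning down \emph{where}; the recursion must enumerate the receiving child together with its sub-state for $T_i$, and one must verify that this enumeration is in bijection with the legal triples of induced embeddings. Once that bookkeeping is in place, reconstruction of the actual supertree by back-tracing DP choices is routine, and Theorem~\ref{theorem5} follows with $p = 6$.
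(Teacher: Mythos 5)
Your route is genuinely different from the paper's. The paper roots the analysis in the structure of an unknown optimal supertree: either two of the three copies are vertex-disjoint (handled by enumerating $O(n^3)$ ways of joining $T_2$ and $T_3$ by a path and then invoking Proposition~\ref{proposition2}), or all three pairwise intersect and, by the Helly property of subtrees of a tree, share a common vertex $r$; the paper then enumerates the $O(n^3)$ choices for the preimages $r_1,r_2,r_3$ of $r$, roots each $T_i$ at $r_i$, and runs a ``gap-free'' DP $f(\{u_1,u_2,u_3\})$ over triples of rooted subtrees. Your approach instead roots the (unknown) output tree, establishes the structural facts $\Delta(T)\le 3\Delta$ and that every vertex of a minimum supertree lies in some embedded copy (so $n(T)\le 3n$), and then runs a single DP over triples of local states (empty / rooted branch / full) without any Helly-based case split. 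Both arguments you give for the structural bounds are correct, and the state space (empty, $(w,y)$ with $y\in N_{T_i}(w)\cup\{\bot\}$, full) is the right parametrisation: one can check that any induced embedding of $T_i$ into a rooted tree induces exactly one of these states at each vertex, with the branch determined uniquely by the subtree-of-a-tree structure of the image.

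The gap is precisely the one you flag but do not resolve: the ``full'' state makes the DP recursion cyclic. For instance $F(\mathrm{full},\mathrm{full},\mathrm{empty})$ can transition back to $F(\mathrm{full},\mathrm{full},\mathrm{empty})$ (send both full demands to the same child, again in state full), so the table is not filled by a finite sweep in any natural order; you cannot simply ``fill $F$ in $O(n^6)$ time.'' This needs to be fixed, and there are at least two ways. (a) Observe, using your own ``no dummy vertex'' lemma, that in a minimum supertree every vertex is in some image, so one may restrict the DP to states having at least one branch coordinate and, moreover, to transitions in which every child state also has at least one branch coordinate; since a branch coordinate strictly decreases along any parent--child edge where the child receives a branch demand, and a child that receives only ``full'' demands has no branch coordinate, one then has to argue that such a child never occurs in an optimal realiser rooted at any vertex of $T$ --- and this in turn requires that the no-dummy argument be applied not just to the whole supertree but to every rooted subproblem, a point worth making explicit. (b) Alternatively, keep the full state and observe that every transition has weight exactly one (the new root), so the system of equations defining $F$ has a unique minimal nonnegative solution that can be computed by a Knuth/Dijkstra-style algorithm for monotone superior functions on an AND--OR graph, or by value iteration for at most $3n$ rounds using your bound $n(T)\le 3n$; this is polynomial but not the straightforward table-filling you describe, and the exponent $p=6$ you quote is not justified once the fix is in place. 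Either repair is workable, but the current writeup stops exactly at the point where the nontrivial bookkeeping begins, so as stated the proposal has a genuine hole in the well-foundedness of the recursion.

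One further minor remark: in bounding the number of children of $v$ you count $\le 3\Delta$ demands; to be careful, a branch state $(w,\bot)$ contributes up to $\Delta$ demands, a branch state $(w,y)$ with $y\ne\bot$ up to $\Delta-1$, and a full state contributes $1$, so the bound is indeed at most $3\Delta$, but only because you can never have three full states while still having a demand for every child --- it is worth saying this, since it is exactly where the no-dummy lemma is doing its work.
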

By Proposition \ref{proposition2}, 
some minimum supertree, say $s(T,T')$, of two given trees $T$ and $T'$
can be determined efficiently.
Repeated applications of this lead to the following natural simple greedy algorithm. 

\medskip

\medskip

\begin{algorithm}[htb!]
\SetAlgoLined
\KwIn{A set $\{ T_1,\ldots,T_k\}$ of trees.}
\KwOut{A supertree $T$ of $\{ T_1,\ldots,T_k\}$.}
\Begin{
\For{$i=1$ \KwTo $k$}
{
$S_i\leftarrow T_i$\;
\For{$j=2$ \KwTo $k$}
{
$S_i\leftarrow s(S_i,T_{i+j-1})$, where indices are identified modulo $k$\;
}
}
$\ell\leftarrow {\rm argmin}\{ n(S_i):i\in [k]\}$\;
\Return{$S_\ell$}\;
}
\caption{\texttt{Greedy Supertree}}
\end{algorithm}

\begin{theorem}\label{theorem2}
\texttt{Greedy Supertree} is an efficient 
$\left(\frac{k}{2}-\frac{1}{2}+\frac{1}{k}\right)$-approximation algorithm 
for {\sc Minimum Superforest} restricted to instances $\{ T_1,\ldots,T_k\}$ consisting of $k$ trees.
\end{theorem}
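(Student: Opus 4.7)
The plan is to compare the output of \texttt{Greedy Supertree} against an optimal supertree $T^\ast$ of order $S:=n(T^\ast)$. By (\ref{e1}), $T^\ast$ is a tree, so for each $j\in[k]$ one can fix a vertex set $V_j\subseteq V(T^\ast)$ of size $n(T_j)$ inducing a copy of $T_j$ in $T^\ast$. The crucial structural fact is a classical Helly-type property of trees: the intersection of any two subtrees of a tree is either empty or again a subtree. Applied to $V_i$ and $V_j$, this implies that $V_i\cap V_j$ induces the same connected subtree in $T_i$ and in $T_j$, and hence is a common induced subtree of $T_i$ and $T_j$ of order exactly $|V_i\cap V_j|$.

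Next I would analyse a single outer iteration. Setting $S_i^{(1)}:=T_i$ and $S_i^{(j)}:=s(S_i^{(j-1)},T_{i+j-1})$ for $j\ge 2$, a $(j-1)$-fold use of (\ref{e2}) gives
\[
n(S_i)=P-\sum_{j=2}^{k}d_j^{(i)},\qquad P:=\sum_{\ell=1}^{k}n(T_\ell),
\]
where $d_j^{(i)}$ denotes the order of a maximum common subtree of $S_i^{(j-1)}$ and $T_{i+j-1}$. Since $T_i$ is itself an induced subtree of $S_i^{(j-1)}$, every common subtree of $T_i$ and $T_{i+j-1}$ is also a common subtree of $S_i^{(j-1)}$ and $T_{i+j-1}$; combined with the previous paragraph this yields $d_j^{(i)}\ge |V_i\cap V_{i+j-1}|$, and hence $n(S_i)\le P-\sum_{j\ne i}|V_i\cap V_j|$.

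The third step is a rotate-and-average argument. Writing $\alpha_v:=|\{j:v\in V_j\}|$ for $v\in V(T^\ast)$, one has $\sum_v\alpha_v=P$ and $2\sum_{i<j}|V_i\cap V_j|=\sum_v\alpha_v(\alpha_v-1)=\sum_v\alpha_v^2-P$, while Cauchy--Schwarz across the $S$ vertices of $T^\ast$ gives $\sum_v\alpha_v^2\ge P^2/S$. Summing the bound above over $i\in[k]$ therefore yields
\[
k\cdot\min_{i\in[k]}n(S_i)\;\le\;\sum_{i=1}^{k}n(S_i)\;\le\;(k+1)P-\frac{P^2}{S}.
\]

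Finally I would optimize in $P$. The right-hand side is a downward parabola with maximum $(k+1)^2 S/4$ attained at $P=(k+1)S/2\in[0,kS]$, so $\min_i n(S_i)\le (k+1)^2 S/(4k)$. The desired approximation ratio therefore reduces to $(k+1)^2/(4k)\le (k^2-k+2)/(2k)=\tfrac{k}{2}-\tfrac12+\tfrac{1}{k}$, which rearranges to $(k-1)(k-3)\ge 0$ and holds for every $k\ge 3$; the cases $k\in\{1,2\}$ are exact by Proposition~\ref{proposition2} and the claimed ratio equals $1$ there. I expect the main obstacle to be the first paragraph, i.e.\ converting the purely geometric overlap $|V_i\cap V_j|$ inside $T^\ast$ into a genuine lower bound on the common-subtree sizes exploited by the algorithm; once this Helly-type translation is in place, the remainder is standard double counting and one-variable calculus.
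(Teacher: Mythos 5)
Your proof is correct and proves the stated theorem, but the concluding averaging step takes a genuinely different route from the paper's. Both proofs share the same skeleton: fix an optimal supertree $T^\ast$ with vertex sets $V_1,\ldots,V_k\subseteq V(T^\ast)$ inducing the copies of $T_1,\ldots,T_k$, observe that $T^\ast[V_i\cap V_j]$ is a common subtree of $T_i$ and $T_j$ (the intersection-of-subtrees fact you invoke is exactly what underlies the paper's $(\ref{e2})$), telescope $(\ref{e2})$ along each rotated run to obtain $n(S_i)\le \sum_\ell n_\ell-\sum_{j\neq i}|V_i\cap V_j|$, and finally average over $i$. From there, however, the paper bounds $\sum_i n(S_i)$ by adding the $\binom{k}{2}$ pairwise inequalities $n_i+n_j-n_{ij}\le n(T^\ast)$ to the Bonferroni inequality $\sum_i n_i-\sum_{i<j}n_{ij}\le n(T^\ast)$, which yields $(\ref{e4})$ and directly the ratio $\frac{k}{2}-\frac{1}{2}+\frac{1}{k}$. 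You instead double-count via $\sum_{i<j}|V_i\cap V_j|=\sum_v\binom{\alpha_v}{2}$, apply Cauchy--Schwarz to lower-bound $\sum_v\alpha_v^2$, and then maximize a one-variable quadratic in $P=\sum_i n_i$. This is a clean alternative, and it in fact yields the sharper ratio $\frac{(k+1)^2}{4k}$, which agrees with $4/3$ at $k=3$ and is \emph{strictly smaller} than $\frac{k}{2}-\frac{1}{2}+\frac{1}{k}$ for every $k\ge 4$; so you have actually proved a slightly stronger approximation guarantee than the theorem states. The one price is that $\frac{(k+1)^2}{4k}>1$ for $k=2$ (it equals $9/8$), so the cases $k\le 2$ must be peeled off separately via Proposition~\ref{proposition2}, which you correctly do; the paper's Bonferroni-based argument handles all $k$ uniformly. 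Both approaches are valid; yours trades a small case split for a tighter constant when $k\ge 4$.
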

For $k=3$, Theorem \ref{theorem2} provides the approximation factor $4/3$.
In Section \ref{sec2} we show that our analysis of \texttt{Greedy Supertree} 
is essentially best possible and that this factor can not be improved.
The appearance of the factor $4/3$ in this context is actually not surprising.
A natural simple dynamic programming algorithm 
that determines a minimum supertree of two given trees
uses a maximum bipartite matching algorithm as a subroutine.
Extending this dynamic programming approach from two to three trees 
would require to replace this subroutine with a $3$-dimensional matching algorithm.
Now, $4/3+\epsilon$ is the best known approximation factor for $3$-dimensional matching~\cite{cy}
with no improvement during the past decade.
More generally, 
the approximation factor in Theorem \ref{theorem2} reflects that the best known~\cite{husc} 
approximation factor for the $k$-set packing problem is $k/2+\epsilon$.
Altogether, a natural challenging problem in this context 
is to improve the approximation factor of $4/3$ 
for {\sc Minimum Superforest} for sets $\{ T_1,T_2,T_3\}$ of three given trees.

In contrast to that {\sc Maximum Subforest} allows a polynomial time approximation.

\begin{theorem}\label{theorem4}
For every $\epsilon>0$, there is some $p\in \mathbb{N}$ with the following property:
For a given set ${\cal F}=\{ F_1,\ldots,F_k\}$ consisting of $k$ forests of order at most $n$,
one can determine in time $O(k n^p)$ 
a subforest $F$ of ${\cal F}$ with
$n(F)\geq (1-\epsilon)n(F_{\rm opt})$,
where $F_{\rm opt}$ is some maximum subforest of ${\cal F}$.
\end{theorem}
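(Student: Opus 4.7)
Set $B := \lceil 1/\epsilon \rceil$. The plan is a standard two-level PTAS: first show that, at the cost of only an $\epsilon$-fraction of the optimum, one may restrict attention to common subforests whose components all have order at most $B$, and then solve this restricted problem exactly by enumerating bounded-type multiplicity vectors and testing each against every $F_i$.

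For the truncation step, I would first prove the following shattering lemma: every tree $T$ admits a set $S\subseteq V(T)$ with $|S|\le n(T)/(B+1)$ such that every component of $T-S$ has at most $B$ vertices. The proof is a quick induction---root $T$ and pick a lowest vertex $v$ whose subtree has size at least $B+1$; removing $v$ separates the subtrees rooted at its children (each of order at most $B$) from a smaller remaining tree on which we recurse, and each cut vertex is amortized against the at least $B+1$ vertices it finalizes. Applied to each component of a maximum common subforest $F_{\mathrm{opt}}$, this produces an induced subforest $F^{\ast}\subseteq F_{\mathrm{opt}}$, still common to $\mathcal{F}$, with $n(F^{\ast})\ge (1-\epsilon)\,n(F_{\mathrm{opt}})$ and all components of order at most $B$.

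Now let $\mathcal{T}_B$ denote the set of isomorphism types of trees of order at most $B$; its cardinality depends only on $\epsilon$. A candidate common subforest with bounded components is fully described by a multiplicity vector $\mathbf{m}\in\{0,1,\ldots,n\}^{\mathcal{T}_B}$, and there are only $(n+1)^{|\mathcal{T}_B|}$ such vectors, which is polynomial in $n$ for each fixed $\epsilon$. I would enumerate them and, for each $\mathbf{m}$ and each $F_i$, decide whether $F_i$ admits a vertex-disjoint induced packing realizing $\mathbf{m}$; the largest $\mathbf{m}$ that passes the test for every $F_i$ is returned.

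The main obstacle is precisely this simultaneous-packing test, which must handle several distinct tree types at once. I would solve it by a tree dynamic program on the rooted components of $F_i$: the state at a vertex $v$ records (a) whether $v$ is discarded or is currently playing a specified ``role'' in some rooted pattern from $\mathcal{T}_B$, and (b) the multiset in $\{0,\ldots,n\}^{\mathcal{T}_B}$ of copies already completed in the subtree below $v$. Since both the number of rooted patterns and the number of multiset coordinates are bounded in terms of $\epsilon$ alone, the table has size polynomial in $n$ and the transitions amount to bounded convolutions over the children of $v$, so feasibility can be checked in $n^{O(1)}$ time per pair $(\mathbf{m},F_i)$. Multiplying by the polynomially many candidate vectors and by the $k$ forests yields the claimed $O(kn^p)$ running time.
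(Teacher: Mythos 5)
Your proposal is correct and matches the paper's approach: a Baker-style shattering of each component into pieces of order $O(1/\epsilon)$ at the cost of an $\epsilon$-fraction of the optimum, followed by a dynamic program over multiplicity vectors of rooted tree patterns of bounded order (this is exactly the paper's Lemma~\ref{lemma1}). The only cosmetic difference is that you apply the shattering to the unknown optimum $F_{\rm opt}$ as a purely existential step, whereas the paper shatters the smallest input forest $F_1$ and invokes $n(F_{\rm opt})\ge n_1/2$ to control the loss, which is why its threshold is $\Delta=\lceil 2/\epsilon\rceil$ versus your $B=\lceil 1/\epsilon\rceil$.
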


\section{Proofs}\label{sec2}

\begin{proof}[Proof of Proposition \ref{proposition1}]
Let $I$ be an instance of {\sc 3-partition}
that consists of $3m$ positive integers $a_1,\ldots,a_{3m}$ 
with $A/4<a_i<A/2$ for each $i\in [3m]$,
where $A=\frac{1}{m}(a_1+\cdots+a_{3m})$.
Let $F_1$ be the forest with $3m$ components that are paths of order $a_1,\ldots,a_{3m}$
and let $F_2$ be the forest with $m$ components that are paths of order $A+2$.
 Let $T_1$ arise from $F_1$ by adding one new vertex $r_1$ as well as 
 $3m$ new edges between $r_1$ and one endvertex in each component of $F_1$.
Similarly, let $T_2$ arise from $F_2$ by adding one new vertex $r_2$ as well as 
$m$ new edges between $r_2$ and one endvertex in each component of $F_2$.
Note that $T_1$ and $T_2$ are subdivided stars.

In order to complete the proof, we show that $I$ is a yes-instance of {\sc 3-partition} 
if and only if 
a maximum subforest of $\{ T_1,T_2\}$ has order $n(T_1)-1=n(F_1)=a_1+\cdots+a_{3m}$.
Clearly, we may assume that $m\geq 8$.
Note that, since $T_2$ contains no vertex of degree $d_{T_1}(r_1)=3m$, 
$T_1$ is not a subtree of $T_2$, and, hence, 
a maximum subforest of $\{ T_1,T_2\}$ has order at most $n(T_1)-1$.

If $I$ is a yes-instance of {\sc 3-partition}, then 
removing from $T_1$ only the vertex $r_1$ and
removing from $T_2$ the vertex $r_2$ as well as two further vertices from each component of $F_2$ corresponding to a feasible solution for $I$
yields two forests that are both isomorphic to $F_1$.
Conversely, suppose now that $F$ is an induced subforest of $T_1$ 
of order $n(T_1)-1$ 
that is isomorphic to an induced subforest $F'$ of $T_2$.
Note that $F$ arises from $T_1$ by removing a single vertex.
Suppose, for a contradiction, that $r_1$ belongs to $F$.
This implies $d_F(r_1)\geq d_{T_1}(r_1)-1=3m-1>m$.
Since $m$ is the maximum degree of $T_2$, this is impossible,
which implies $F=T_1-r_1=F_1$.
Suppose, for a contradiction, that $r_2$ belongs to $F'$.
Since $F$ is the union of paths, this implies that there are $m-2$ neighbors $u_1,\ldots,u_{m-2}$ of $r_2$ in $T_2$ that do not belong to $F'$.
Let $P_1,\ldots,P_{m-2}$ be the components of $F_2$ such that $P_i$ contains $u_i$ for $i\in [m-2]$.
Since each $P_i-u_i$ is a path of order $A+1$ and each $a_i$ is strictly less than $A/2$,
for each $P_i$, there are at least three vertices that do not belong to $F'$.
Since $m\geq 8$, this implies the contradiction $n(F')\leq n(T_2)-3(m-2)<n(T_2)-2m-1=n(T_1)-1=n(F_1)=n(F)$.
Hence, $r_2$ does not belong to $F'$, that is, $F'$ is an induced subforest of $F_2$.
Again, since each component of $F_2$ is a path of order $A+2$ and each $a_i$ is strictly less than $A/2$,
for each component of $F_2$, there are at least two vertices that do not belong to $F'$.
Since $n(F')=n(F)=n(T_2)-2m-1=n(F_2)-2m$,
it follows that each component of $F_2$ contains exactly two vertices that do not belong to $F'$.
These two vertices from each component of $F_2$ indicate a feasible solution for $I$,
which implies that $I$ is a yes-instance of {\sc 3-partition}.
\end{proof}

\begin{proof}[Proof of Theorem \ref{theorem1}]
We show this result by an efficient reduction of the well-known 
NP-complete problem {\sc $3$-dimensional matching} (3DM), cf.~[SP1] in~\cite{gajo},
to {\sc Minimum Superforest}.
Let $I$ be an instance of 3DM consisting of three disjoint sets
$X=\{ x_1,\ldots,x_q\}$,
$Y=\{ y_1,\ldots,y_q\}$, and
$Z=\{ z_1,\ldots,z_q\}$
as well as a set $M\subseteq X\times Y\times Z$ of triples.
As 3DM remains NP-complete under this restriction~\cite{gajo}, 
we assume that 
every element of $X\cup Y\cup Z$ occurs in some triple but
no element of $X\cup Y\cup Z$ occurs in more than three triples.

For each $i\in [q]$, 
let $T_0(x_i)$ be the tree that arises from the disjoint union of 
an isolated vertex $r(x_i)$ and three paths $P_1$, $P_2$, and $P_3$,
each of order $2q$,
by adding an edge between $r(x_i)$ and an endvertex of each $P_\ell$.
The three vertices in $T_0(x_i)$ at distance $j$ from $r(x_i)$ are associated with $y_j$ and 
the three vertices in $T_0(x_i)$ at distance $p+k$ from $r(x_i)$ are associated with $z_k$.

If $x_i$ is contained in three triples from $M$,
then the tree $T(x_i)$ arises from $T_0(x_i)$ 
by associating each triple $(x_i,y_j,z_k)$ from $M$ containing $x_i$ 
with a different path $P_\ell$ and 
attaching one new endvertex to each of the two vertices in that $P_\ell$ 
at distances $j$ and $q+k$ from $r(x_i)$,
that is, the two vertices associated with $y_j$ and $z_k$,
respectively. 
See Figure \ref{fig4} for an illustration.

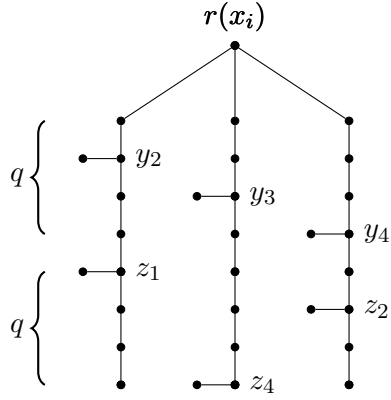
\begin{figure}[htb!]
    \centering
    \begin{tikzpicture}[scale=0.5]

    \tikzstyle{convcols}=[color=black]
    \tikzstyle{point}=[draw,circle,inner sep=0.cm, minimum size=1mm, fill=black]
    \tikzstyle{point2}=[draw,circle,inner sep=0.cm, minimum size=0.5mm, fill=black]
    \tikzstyle{pointred}=[draw,circle,inner sep=0.cm, minimum size=1.5mm, fill=red]

  \begin{scope}[yshift=12cm]

  \begin{scope}[xshift=-10cm]

\foreach \i in {1,2,3}{
            \node[point] (r1) at (0,0) [label=above:$r(x_i)$] {};
            \node[point] (a) at (3*\i-6,-9) [label=above:] {};
             \node[point] at (3*\i-6,-6) [label=above:] {};
            \node[point] (b) at (3*\i-6,-2) [label=above:] {};
            \draw (b) -- (0,0);
            \draw (b) -- (a);
    }
            \node[point] (x) at (-3,-3) [label=right:$y_2$] {};
            \node[point] (y) at (-4,-3) [label=above:] {};
            \draw (x) -- (y);

            \node[point] (x) at (0,-4) [label=right:$y_3$] {};
            \node[point] (y) at (-1,-4) [label=above:] {};
            \draw (y) -- (x);

            \node[point] (x) at (3,-5) [label=right:$y_4$] {};
            \node[point] (y) at (2,-5) [label=above:] {};
            \draw (y) -- (x);

    \draw[thick,black,decorate,decoration={brace,amplitude=5}] (-5,-5) -- (-5,-2)   node[midway,left]{$q\text{ }$};
    \draw[thick,black,decorate,decoration={brace,amplitude=5}] (-5,-9) -- (-5,-6)   node[midway,left]{$q\text{ }$};

                        \node[point] (x) at (-3,-6) [label=right:$z_1$] {};
            \node[point] (y) at (-4,-6) [label=above:] {};
            \draw (x) -- (y);

            \node[point] (x) at (0,-9) [label=right:$z_4$] {};
            \node[point] (y) at (-1,-9) [label=above:] {};
            \draw (y) -- (x);

            \node[point] (x) at (3,-7) [label=right:$z_2$] {};
            \node[point] (y) at (2,-7) [label=above:] {};
            \draw (y) -- (x);

\node[point] (y) at (-3,-4) [label=above:] {};
\node[point] (y) at (-3,-5) [label=above:] {};
\node[point] (y) at (-3,-6) [label=above:] {};
\node[point] (y) at (-3,-7) [label=above:] {};
\node[point] (y) at (-3,-8) [label=above:] {};
\node[point] (y) at (-3,-9) [label=above:] {};
  
\node[point] (y) at (0,-3) [label=above:] {};
\node[point] (y) at (0,-4) [label=above:] {};
\node[point] (y) at (0,-5) [label=above:] {};
\node[point] (y) at (0,-6) [label=above:] {};
\node[point] (y) at (0,-7) [label=above:] {};
\node[point] (y) at (0,-8) [label=above:] {};
\node[point] (y) at (0,-9) [label=above:] {};

\node[point] (y) at (3,-3) [label=above:] {};
\node[point] (y) at (3,-4) [label=above:] {};
\node[point] (y) at (3,-5) [label=above:] {};
\node[point] (y) at (3,-6) [label=above:] {};
\node[point] (y) at (3,-7) [label=above:] {};
\node[point] (y) at (3,-8) [label=above:] {};
\node[point] (y) at (3,-9) [label=above:] {};

            \end{scope}
  \end{scope}
\end{tikzpicture}
\caption{The tree $T(x_i)$ if $q=4$ and $x_i$ is contained in the three triples 
$(x_i,y_2,z_1)$, $(x_i,y_3,z_4)$, and $(x_i,y_4,z_2)$.}\label{fig4}
\end{figure}
If $x_i$ is contained in less than three triples, 
then proceed as before for the one or two triples containing $x_i$
and attach a new endvertex to each of the $2q$ vertices of those $P_\ell$
that are not associated with some triple containing $x_i$.
See Figure \ref{fig5} for an illustration.

\begin{figure}[htb!]
    \centering
    \begin{tikzpicture}[scale=0.5]

    \tikzstyle{convcols}=[color=black]
    \tikzstyle{point}=[draw,circle,inner sep=0.cm, minimum size=1mm, fill=black]
    \tikzstyle{point2}=[draw,circle,inner sep=0.cm, minimum size=0.5mm, fill=black]
    \tikzstyle{pointred}=[draw,circle,inner sep=0.cm, minimum size=1.5mm, fill=red]

  \begin{scope}[yshift=12cm]

  \begin{scope}[xshift=-10cm]

\foreach \i in {1}{
            \node[point] (r1) at (0,0) [label=above:$r(x_i)$] {};
            \node[point] (a) at (3*\i-6,-9) [label=above:] {};
             \node[point] at (3*\i-6,-6) [label=above:] {};
            \node[point] (b) at (3*\i-6,-2) [label=above:] {};
            \draw (b) -- (0,0);
            \draw (b) -- (a);

                        \foreach \j in {-2,-3,-4,-5,-6,-7,-8,-9}{
            \node[point] (a) at (3*\i-6,\j) [label=above:] {};
            \node[point] (b) at (3*\i-7,\j) [label=above:] {};
         \draw (b) -- (a);
         }

    }

\foreach \i in {2,3}{
            \node[point] (r1) at (0,0) [label=above:$r(x_i)$] {};
            \node[point] (a) at (3*\i-6,-9) [label=above:] {};
             \node[point] at (3*\i-6,-6) [label=above:] {};
            \node[point] (b) at (3*\i-6,-2) [label=above:] {};
            \draw (b) -- (0,0);
            \draw (b) -- (a);
    }

            \node[point] (x) at (-3,-3) [label=above:] {};
            \node[point] (y) at (-4,-3) [label=above:] {};
            \draw (x) -- (y);

            \node[point] (x) at (0,-4) [label=above:] {};
            \node[point] (y) at (-1,-4) [label=above:] {};
            \draw (y) -- (x);

            \node[point] (x) at (3,-5) [label=above:] {};
            \node[point] (y) at (2,-5) [label=above:] {};
            \draw (y) -- (x);

    \draw[thick,black,decorate,decoration={brace,amplitude=5}] (-5,-5) -- (-5,-2)   node[midway,left]{$q\text{ }$};
    \draw[thick,black,decorate,decoration={brace,amplitude=5}] (-5,-9) -- (-5,-6)   node[midway,left]{$q\text{ }$};

                        \node[point] (x) at (-3,-6) [label=above:] {};
            \node[point] (y) at (-4,-6) [label=above:] {};
            \draw (x) -- (y);

            \node[point] (x) at (0,-9) [label=above:] {};
            \node[point] (y) at (-1,-9) [label=above:] {};
            \draw (y) -- (x);

            \node[point] (x) at (3,-7) [label=above:] {};
            \node[point] (y) at (2,-7) [label=above:] {};
            \draw (y) -- (x);

\node[point] (y) at (0,-3) [label=above:] {};
\node[point] (y) at (0,-4) [label=above:] {};
\node[point] (y) at (0,-5) [label=above:] {};
\node[point] (y) at (0,-6) [label=above:] {};
\node[point] (y) at (0,-7) [label=above:] {};
\node[point] (y) at (0,-8) [label=above:] {};
\node[point] (y) at (0,-9) [label=above:] {};

\node[point] (y) at (3,-3) [label=above:] {};
\node[point] (y) at (3,-4) [label=above:] {};
\node[point] (y) at (3,-5) [label=above:] {};
\node[point] (y) at (3,-6) [label=above:] {};
\node[point] (y) at (3,-7) [label=above:] {};
\node[point] (y) at (3,-8) [label=above:] {};
\node[point] (y) at (3,-9) [label=above:] {};

            \end{scope}



  \end{scope}

    \begin{scope}[yshift=12cm]

  \begin{scope}[xshift=5cm]

\foreach \i in {1,2}{
            \node[point] (r1) at (0,0) [label=above:$r(x_i)$] {};
            \node[point] (a) at (3*\i-6,-9) [label=above:] {};
             \node[point] at (3*\i-6,-6) [label=above:] {};
            \node[point] (b) at (3*\i-6,-2) [label=above:] {};
            \draw (b) -- (0,0);
            \draw (b) -- (a);

                        \foreach \j in {-2,-3,-4,-5,-6,-7,-8,-9}{
            \node[point] (a) at (3*\i-6,\j) [label=above:] {};
            \node[point] (b) at (3*\i-7,\j) [label=above:] {};
         \draw (b) -- (a);
         }

    }

\foreach \i in {3}{
            \node[point] (r1) at (0,0) [label=above:] {};
            \node[point] (a) at (3*\i-6,-9) [label=above:] {};
             \node[point] at (3*\i-6,-6) [label=above:] {};
            \node[point] (b) at (3*\i-6,-2) [label=above:] {};
            \draw (b) -- (0,0);
            \draw (b) -- (a);
    }

            \node[point] (x) at (-3,-3) [label=above:] {};
            \node[point] (y) at (-4,-3) [label=above:] {};
            \draw (x) -- (y);

            \node[point] (x) at (0,-4) [label=above:] {};
            \node[point] (y) at (-1,-4) [label=above:] {};
            \draw (y) -- (x);

            \node[point] (x) at (3,-5) [label=above:] {};
            \node[point] (y) at (2,-5) [label=above:] {};
            \draw (y) -- (x);

    \draw[thick,black,decorate,decoration={brace,amplitude=5}] (-5,-5) -- (-5,-2)   node[midway,left]{$q\text{ }$};
    \draw[thick,black,decorate,decoration={brace,amplitude=5}] (-5,-9) -- (-5,-6)   node[midway,left]{$q\text{ }$};

                        \node[point] (x) at (-3,-6) [label=above:] {};
            \node[point] (y) at (-4,-6) [label=above:] {};
            \draw (x) -- (y);

            \node[point] (x) at (0,-9) [label=above:] {};
            \node[point] (y) at (-1,-9) [label=above:] {};
            \draw (y) -- (x);

            \node[point] (x) at (3,-7) [label=above:] {};
            \node[point] (y) at (2,-7) [label=above:] {};
            \draw (y) -- (x);

\node[point] (y) at (3,-3) [label=above:] {};
\node[point] (y) at (3,-4) [label=above:] {};
\node[point] (y) at (3,-5) [label=above:] {};
\node[point] (y) at (3,-6) [label=above:] {};
\node[point] (y) at (3,-7) [label=above:] {};
\node[point] (y) at (3,-8) [label=above:] {};
\node[point] (y) at (3,-9) [label=above:] {};

            \end{scope}



  \end{scope}

    \end{tikzpicture}
\caption{The left shows $T(x_i)$ if $q=4$ and $x_i$ is contained in exactly the two triples
$(x_i,y_3,z_4)$ and $(x_i,y_4,z_2)$. 
The right shows $T(x_i)$ if $q=4$ and 
$x_i$ is contained in only one triple $(x_i,y_4,z_2)$.}\label{fig5}
\end{figure}
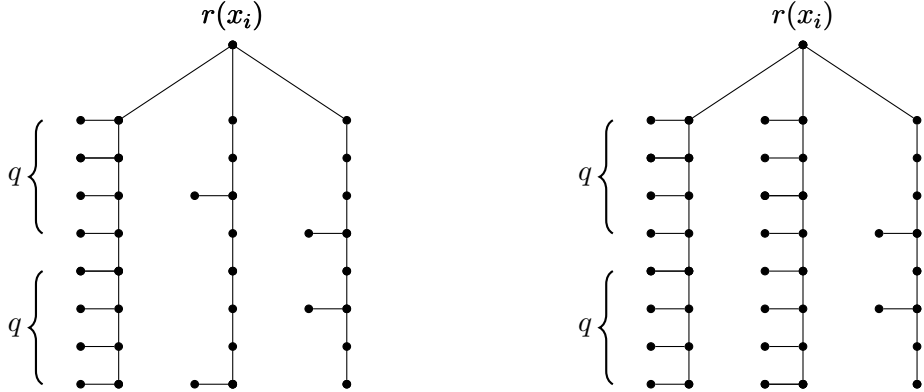
For each $j\in [q]$, 
let $T(y_j)$ be the tree that arises from the disjoint union of 
an isolated vertex $r(y_j)$ and three paths $P_1$, $P_2$, and $P_3$, 
each of order $2q$, by 
\begin{itemize}
\item adding an edge between $r(y_j)$ and an endvertex of each $P_\ell$,
\item attaching a new endvertex to each of the $2q$ vertices of two of the $P_\ell$, and
\item attaching one new endvertex to the vertex at distance $j$ from $r(y_j)$ 
on the third $P_\ell$,
which we call {\it the relevant branch for $y_j$}
in what follows.
\end{itemize}
Let $T(z_k)$ be defined similarly. 
In particular, $T(z_k)$ has an endvertex attached 
to a vertex at distance $q+k$ from $r(z_k)$; 
see Figure \ref{fig6} for an illustration.

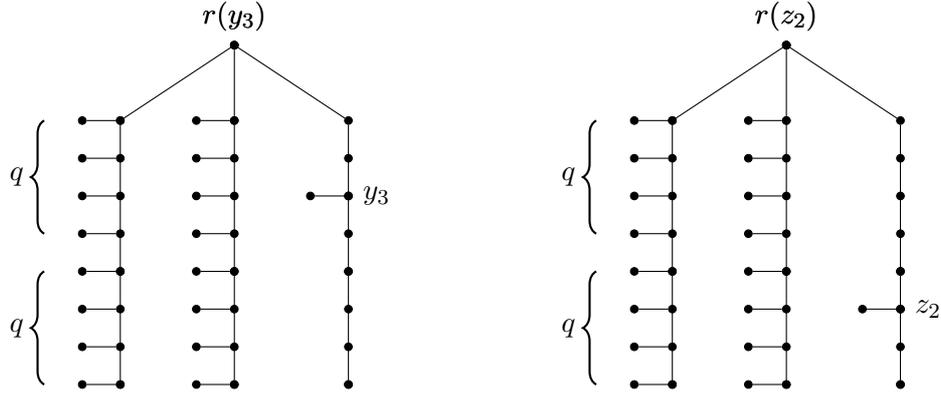
\begin{figure}[htb!]
    \centering
    \begin{tikzpicture}[scale=0.5]

    \tikzstyle{convcols}=[color=black]
    \tikzstyle{point}=[draw,circle,inner sep=0.cm, minimum size=1mm, fill=black]
    \tikzstyle{point2}=[draw,circle,inner sep=0.cm, minimum size=0.5mm, fill=black]
    \tikzstyle{pointred}=[draw,circle,inner sep=0.cm, minimum size=1.5mm, fill=red]

    \begin{scope}[yshift=12cm]

  \begin{scope}[xshift=5cm]

\foreach \i in {1,2}{
            \node[point] (r1) at (0,0) [label=above:$r(y_3)$] {};
            \node[point] (a) at (3*\i-6,-9) [label=above:] {};
             \node[point] at (3*\i-6,-6) [label=above:] {};
            \node[point] (b) at (3*\i-6,-2) [label=above:] {};
            \draw (b) -- (0,0);
            \draw (b) -- (a);

                        \foreach \j in {-2,-3,-4,-5,-6,-7,-8,-9}{
            \node[point] (a) at (3*\i-6,\j) [label=above:] {};
            \node[point] (b) at (3*\i-7,\j) [label=above:] {};
         \draw (b) -- (a);
         }

    }

\foreach \i in {3}{
            \node[point] (r1) at (0,0) [label=above:] {};
            \node[point] (a) at (3*\i-6,-9) [label=above:] {};
             \node[point] at (3*\i-6,-6) [label=above:] {};
            \node[point] (b) at (3*\i-6,-2) [label=above:] {};
            \draw (b) -- (0,0);
            \draw (b) -- (a);
    }

    \draw[thick,black,decorate,decoration={brace,amplitude=5}] (-5,-5) -- (-5,-2)   node[midway,left]{$q\text{ }$};
    \draw[thick,black,decorate,decoration={brace,amplitude=5}] (-5,-9) -- (-5,-6)   node[midway,left]{$q\text{ }$};

            \node[point] (x) at (3,-4) [label=right:$y_3$] {};
            \node[point] (y) at (2,-4) [label=above:] {};
            \draw (y) -- (x);

\node[point] (y) at (3,-3) [label=above:] {};
\node[point] (y) at (3,-4) [label=above:] {};
\node[point] (y) at (3,-5) [label=above:] {};
\node[point] (y) at (3,-6) [label=above:] {};
\node[point] (y) at (3,-7) [label=above:] {};
\node[point] (y) at (3,-8) [label=above:] {};
\node[point] (y) at (3,-9) [label=above:] {};

            \end{scope}



  \end{scope}

    \end{tikzpicture}\hspace{2cm}\begin{tikzpicture}[scale=0.5]

    \tikzstyle{convcols}=[color=black]
    \tikzstyle{point}=[draw,circle,inner sep=0.cm, minimum size=1mm, fill=black]
    \tikzstyle{point2}=[draw,circle,inner sep=0.cm, minimum size=0.5mm, fill=black]
    \tikzstyle{pointred}=[draw,circle,inner sep=0.cm, minimum size=1.5mm, fill=red]

    \begin{scope}[yshift=12cm]

  \begin{scope}[xshift=5cm]

\foreach \i in {1,2}{
            \node[point] (r1) at (0,0) [label=above:$r(z_2)$] {};
            \node[point] (a) at (3*\i-6,-9) [label=above:] {};
             \node[point] at (3*\i-6,-6) [label=above:] {};
            \node[point] (b) at (3*\i-6,-2) [label=above:] {};
            \draw (b) -- (0,0);
            \draw (b) -- (a);

                        \foreach \j in {-2,-3,-4,-5,-6,-7,-8,-9}{
            \node[point] (a) at (3*\i-6,\j) [label=above:] {};
            \node[point] (b) at (3*\i-7,\j) [label=above:] {};
         \draw (b) -- (a);
         }

    }

\foreach \i in {3}{
            \node[point] (r1) at (0,0) [label=above:] {};
            \node[point] (a) at (3*\i-6,-9) [label=above:] {};
             \node[point] at (3*\i-6,-6) [label=above:] {};
            \node[point] (b) at (3*\i-6,-2) [label=above:] {};
            \draw (b) -- (0,0);
            \draw (b) -- (a);
    }

    \draw[thick,black,decorate,decoration={brace,amplitude=5}] (-5,-5) -- (-5,-2)   node[midway,left]{$q\text{ }$};
    \draw[thick,black,decorate,decoration={brace,amplitude=5}] (-5,-9) -- (-5,-6)   node[midway,left]{$q\text{ }$};

            \node[point] (x) at (3,-7) [label=right:$z_2$] {};
            \node[point] (y) at (2,-7) [label=above:] {};
            \draw (y) -- (x);
\node[point] (y) at (3,-3) [label=above:] {};
\node[point] (y) at (3,-4) [label=above:] {};
\node[point] (y) at (3,-5) [label=above:] {};
\node[point] (y) at (3,-6) [label=above:] {};
\node[point] (y) at (3,-7) [label=above:] {};
\node[point] (y) at (3,-8) [label=above:] {};
\node[point] (y) at (3,-9) [label=above:] {};

            \end{scope}



  \end{scope}

    \end{tikzpicture}
    
    \caption{$T(y_3)$ on the left and $T(z_2)$ on the right.}\label{fig6}
\end{figure}
Now, let $T_x$ arise from the disjoint union of an isolated vertex $r_x$
and the trees $T(x_1),\ldots,T(x_q)$ by adding $q$ new edges between $r_x$
and $r(x_1),\ldots,r(x_q)$.
Let $T_y$ and $T_z$ be defined similarly.
Note that the trees $T_x$, $T_y$, and $T_z$ 
are rooted in the vertices $r_x$, $r_y$, and $r_z$ of degree $q$, respectively.
The order of $T_y$ and $T_z$ is $n=10q^2+2q+1$ 
while the order of $T_x$ depends on the instance $I$.
In order to complete the proof, we show that $I$ is a yes-instance of 3DM
if and only if a minimum superforest for $\{ T_x,T_y,T_z\}$ 
has order at most $10q^2+3q+1$.

Suppose that $I$ is a yes-instance of 3DM.
Let $M^*\subseteq M$ be such that 
every element of $X\cup Y\cup Z$ 
belongs to exactly one triple from $M^*$.
Let the tree $T$ arise from $T_y$ as follows:
For each $j\in [q]$, consider the unique triple, say $(x_i,y_j,z_k)$, 
from $M^*$ that contains $y_j$,
and attach a new endvertex to the vertex at distance $q+k$ 
from $r(y_j)$ 
associated with $z_k$ 
that belongs to the relevant branch for $y_j$.
Clearly, the order of $T$ is $n(T_y)+q=10q^2+3q+1$
and it is easy to verify that $T$ contains three induced subtrees 
isomorphic to $T_x$, $T_y$, and $T_z$, respectively.

Conversely, suppose that a minimum superforest $T$ for $\{ T_x,T_y,T_z\}$ 
has order at most $10q^2+3q+1$, which equals $n(T_y)+q$.
By renaming vertices, we may assume that $T$ arises from $T_y$
by adding at most $q$ vertices and suitable edges.
The structure of $T_y$ and $T_z$ implies that $T$ arises from $T_y$
by attaching one new endvertex to some vertex of each of the $q$ relevant branches within $T_y$; these $q$ additional vertices are attached to vertices 
associated with the distinct elements of $Z$.
Since $T_x$ is an induced subgraph of $T$
and $r_y$ is the only vertex of $T$ of degree $q$,
for a copy of $T_x$ within $T$, the root vertex $r_x$ of $T_x$ is mapped to $r_y$
and the $q$ children of $r_x$ in $T_x$ are mapped in a bijective way to the $q$ children of $r_y$ within $T$.
This bijective mapping indicates how to choose, for each $i\in [q]$,
a triple from $M$ containing $x_i$,
for which the set $M^*$ of all $q$ selected triples 
is such that every element of $X\cup Y\cup Z$ 
is contained in exactly one triple from $M^*$.
This completes the proof.
\end{proof}

\begin{proof}[Proof of Theorem \ref{theorem5}]
Let ${\cal T}=\{ T_1,T_2,T_3\}$ be the set of the three given trees 
of order at most $n$ and maximum degree at most $\Delta$.
For notational simplicity, assume that the trees $T_i$ have disjoint sets of vertices. 
We explain how to determine in polynomial time 
supertrees $T$ of ${\cal T}$ that
\begin{itemize}
\item either contain disjoint copies of two of the three trees (type 1) 
\item or contain copies of all three trees that pairwise intersect (type 2)
\end{itemize}
and are of minimum order subject to this condition.
Returning the smallest such supertree yields a minimum supertree of ${\cal T}$.

Firstly, consider supertrees of type 1 
that contain disjoint copies of $T_2$ and $T_3$;
the other two pairs can be treated symmetrically.
Let $\{ T_4,\ldots,T_q\}$ be the set of all trees 
that arise from disjoint copies of $T_2$ and $T_3$ and 
a path $P$ of order between $2$ and $n$ by identifying 
some vertex $u$ in $T_2$ with one endvertex of $P$
and some vertex $v$ of $T_3$ with the other endvertex of $P$.
Since there are at most $n$ choices 
for the length of $P$, 
for the vertex $u$, and 
for the vertex $v$,
we have $q=O(n^3)$. 
By Proposition \ref{proposition2},
the $O(n^3)$ minimum supertrees of 
$\{ T_1,T_4\},\{ T_1,T_5\},\ldots,\{ T_1,T_q\}$
can be determined in polynomial time,
and a smallest of all these trees is a supertree of ${\cal T}$
containing disjoint copies of $T_2$ and $T_3$
that is of minimum order subject to this condition.

Secondly, consider a supertree $T$ of type 2.
Let $T_i'$ be an induced copy of $T_i$ within $T$
such that $T_1'$, $T_2'$, and $T_3'$ pairwise intersect.
By the Helly property of subtrees of a tree, 
some vertex, say $r$, belongs to $T_1'$, $T_2'$, and $T_3'$.
For all possible at most $O(n^3)$ choices for vertices 
$r_1$ in $T_1$,
$r_2$ in $T_2$, and 
$r_3$ in $T_3$ corresponding to $r$, we proceed as follows for every $i\in [3]$:
\begin{itemize}
\item Root $T_i$ in $r_i$.
\item For $u_i\in V(T_i)$, let $T_i(u_i)$ be the subtree of $T_i$ rooted in $u_i$ 
that is induced by $u_i$ and all its descendants within $T_i$.
\item Let $f(\{ u_i\})=n(T_i(u_i))$.
\item Let $f(\{ u_1,u_2,u_3\})$ be the minimum order of a supertree $T$ 
of $\{ T_1(u_1),T_2(u_2),T_3(u_3)\}$ rooted in some vertex $s$ such that 
$T$ contains a copy of $T_i$ in which $s$ corresponds to $u_i$ for every $i\in [3]$.
\item Let $f(\{ u_1,u_2\})$ be the minimum order of a supertree $T$ 
of $\{ T_1(u_1),T_2(u_2)\}$ rooted in some vertex $s$ such that 
$T$ contains a copy of $T_i$ in which $s$ corresponds to $u_i$ for every $i\in [2]$.
Define $f(\{ u_1,u_3\})$ and $f(\{ u_2,u_3\})$ symmetrically.
By Proposition \ref{proposition2}, 
$f(\{ u_1,u_2\})$,
$f(\{ u_1,u_3\})$, and
$f(\{ u_2,u_3\})$ can be determined efficiently.
\end{itemize}
Note that $f(\{ r_1,r_2,r_3\})$ is the minimum order of a supertree $T$ 
of $\{ T_1,T_2,T_3\}$ rooted in some vertex $r$ that contains 
a copy of $T_i$ in which $r$ corresponds to $r_i$ for every $i\in [3]$.
Since we consider all $O(n^3)$ choices for the $r_i$,
the smallest such tree is a minimum supertree of type 2.

In order to complete the proof, 
we explain how to determine the values $f(\{ u_1,u_2,u_3\})$
by dynamic programming in polynomial time.
If $u_1$ is an endvertex of $T_1$, then $f(\{ u_1,u_2,u_3\})=f(\{ u_2,u_3\})$;
similarly, if $u_2$ or $u_3$ are endvertices.
Hence, we may assume that $u_1$, $u_2$, and $u_3$ are no endvertices.
Let $U_i$ be the set of children of $u_i$ in $T_i$.
The definitions imply that $f(\{ u_1,u_2,u_3\})$ is the minimum of 
$$f({\cal P})=1+\sum\limits_{j=1}^kf(e_j)$$
over all partitions ${\cal P}=\{ e_1,\ldots,e_k\}$ of $U_1\cup U_2\cup U_3$
into sets $e_j$ with $|e_j\cap U_i|\leq 1$ for every $i\in [3]$ and $j\in [k]$.
Since $|U_1\cup U_2\cup U_3|\leq 3\Delta$,
there are finitely many such partitions.
Altogether, it follows that the values $f(\cdot)$
(together with suitable realizers) 
can be determined efficiently by dynamic programming,
which completes the proof.
\end{proof}
By an inductive argument also considering type 1 and type 2 supertrees
and using the Helly property, Theorem \ref{theorem5} easily generalizes 
to {\sc Minimum Supertree} for given sets of $k$ trees
with the polynomial bounding the running time depending on $k$.
Furthermore, Theorem \ref{theorem5} remains true 
under the weaker hypothesis that only two of the trees in ${\cal T}=\{ T_1,T_2,T_3\}$ 
have maximum degree $O(\log(n))$
and the third tree is of arbitrary maximum degree.

\begin{proof}[Proof of Theorem \ref{theorem2}]
Let ${\cal T}=\{ T_1,\ldots,T_k\}$ be the given set of $k$ trees.
By Proposition \ref{proposition2},
for two given trees $T$ and $T'$, 
some minimum supertree $s(T,T')$ of $\{ T,T'\}$
can be found efficiently.
The trees $S_i$ determined by \texttt{Greedy Supertree}
are of the form
\begin{eqnarray*}
S_i &=& 
s(\ldots s(s(s(T_i,T_{i+1}),T_{i+2}),T_{i+3})\ldots,T_{i+k-1})
\mbox{ for $i\in [k]$},
\end{eqnarray*}
where indices are identified modulo $k$.
We show that returning the smallest of the $S_i$ yields a 
$\left(\frac{k}{2}-\frac{1}{2}+\frac{1}{k}\right)$-approximation algorithm 
for {\sc Minimum Superforest} on ${\cal T}$.
Therefore, let $T$ be a minimum superforest of ${\cal T}$.
Let $n=n(T)$.
For $i\in [k]$, let $n_i=n(T_i)$ and let $V_i\subseteq V(T)$ 
be such that $T[V_i]\simeq T_i$.
For $ij\in {[k]\choose 2}$,
let $n_{ij}=|V_i\cap V_j|$.
Clearly,
\begin{eqnarray*}
n_i+n_j-n_{ij} & \leq & n \mbox{ for every $ij\in {[k]\choose 2}$ and}\\
\sum\limits_{i\in [k]}n_i-\sum\limits_{ij\in {[k]\choose 2}}n_{ij}& \leq & n.
\end{eqnarray*}
Adding these ${k\choose 2}+1$ inequalities yields
\begin{eqnarray}\label{e4}
k\sum\limits_{i\in [k]}n_i-2\sum\limits_{ij\in {[k]\choose 2}}n_{ij}\leq \left({k\choose 2}+1\right)n,
\end{eqnarray}
and, hence,
\begin{eqnarray}
\frac{1}{k}\sum\limits_{j\in [k]}\left(\sum\limits_{i\in [k]}n_i-\sum\limits_{i\in [k]\setminus \{ j\}}n_{ij}\right)
& = & \sum\limits_{i\in [k]}n_i-\frac{2}{k}\sum\limits_{ij\in {[k]\choose 2}}n_{ij}\nonumber \\
& \stackrel{(\ref{e4})}{\leq} & \frac{1}{k}\left({k\choose 2}+1\right)n\nonumber \\
& = & \left(\frac{k}{2}-\frac{1}{2}+\frac{1}{k}\right)n.\label{e3}
\end{eqnarray}
Since $n_{i(i+1)}$ is the order of some possibly not largest subtree of $\{ T_i,T_{i+1}\}$, 
we have 
$n(s(T_i,T_{i+1}))\stackrel{(\ref{e2})}{\leq} n_i+n_{i+1}-n_{i(i+1)}.$
Since $n_{i(i+2)}$ is the order of some subtree of $\{ T_i,T_{i+2}\}$
and the tree $s(T_i,T_{i+1})$ contains a copy of $T_i$,
we have 
$n(s(s(T_i,T_{i+1}),T_{i+2})
\leq n(s(T_i,T_{i+1}))+n_{i+2}-n_{i(i+2)}
\leq n_i+n_{i+1}-n_{i(i+1)}+n_{i+2}-n_{i(i+2)}.$
Using that $n_{ij}$ is the order of some subtree of $\{ T_i,T_j\}$ 
and that each tree of the form 
$s(\ldots s(s(T_i,T_{i+1}),T_{i+2})\ldots,T_{i+\ell})$
for some $\ell$ contains a copy of $T_i$,
it now follows inductively that
\begin{eqnarray}\label{e5}
n(S_j)&\leq &\sum\limits_{i\in [k]}n_i-\sum\limits_{i\in [k]\setminus \{ j\}}n_{ij}.
\end{eqnarray} 
Altogether, we obtain
\begin{eqnarray*} 
\min\{ n(S_i):i\in [k]\}
&\leq &\frac{1}{k}\sum\limits_{j\in [k]}n(S_j)
\stackrel{(\ref{e5})}{\leq} \frac{1}{k}\sum\limits_{j\in [k]}\left(\sum\limits_{i\in [k]}n_i-\sum\limits_{i\in [k]\setminus \{ j\}}n_{ij}\right)
\stackrel{(\ref{e3})}{\leq} \left(\frac{k}{2}-\frac{1}{2}+\frac{1}{k}\right)n,
\end{eqnarray*} which completes the proof.
\end{proof}
The analysis of \texttt{Greedy Supertree}
is essentially best possible.
We give an example for $k=3$ showing that the factor $4/3$ 
can not be improved.
For non-negative integers $n_1,\ldots,n_p$,
let the tree $T(n_1,\ldots,n_p)$ arise from 
a path $P:u_1\ldots u_p$ of order $p$
by attaching, for every $i\in [p]$,
exactly $n_i$ new endvertices to $u_i$.
For positive integers $a$, $b$, and $c$ with $a>b>c\geq 1$, 
consider the three trees 
\begin{align*}
T_1 & = T(0,b,a,a,c,0),\\
T_2 & = T(0,b,0,a,0,0,0,a,0),\mbox{ and }\\
T_3 & = T(0,b,0,0,0,0,0,a,0,c,a,0)
\end{align*}
illustrated in Figure \ref{fig1}.

\begin{figure}[htb!]
    \centering
    \begin{tikzpicture} [scale=0.21]
    \tikzstyle{point}=[draw,circle,inner sep=0.cm, minimum size=0.75mm, fill=black]
    \tikzstyle{point2}=[draw,circle,inner sep=0.cm, minimum size=0.3mm, fill=black]

\xdef\j{0} 
\xdef\k{0}

\foreach \arr in {{0,4,6,6,3,0},{0,4,0,6,0,0,0,6,0},{0,4,0,0,0,0,0,6,0,3,6,0}}{

\xdef\j{\j-7} 
\xdef\k{\k-9} 

\begin{scope}[shift={(\k,\j)}]

\xdef\i{0} 

\foreach \w in \arr{

\ifthenelse{\w=0}{
    \node[point] at (\i,0) [label=above :] {};
}{

\begin{scope}[shift={(\i,0)}]
    
\begin{scope}[shift={(-0.5*\w,0)}]

\node[point] (c) at (0.5*\w,0) [label=above :] {};
\begin{scope}[shift={(0,-0.5)}] 
\draw [rounded corners, rounded corners=1mm] (0,-1)--(\w,-1)--(\w,-2)-- (0,-2) -- cycle;
\node[point] (a) at (0.5,-1.5) [label=above :] {};
\node[point] (b) at (\w-0.5,-1.5) [label=above :] {};
\draw (a) -- (c) -- (b);
\foreach \i in {0.35*\w,0.5*\w,0.65*\w}{
    \node[point2] at (\i,-1.5) [label=above :] {};
}
\ifthenelse{\w=3}{\node at (0.5*\w,-3) [label=above :] {$c$}}{};
\ifthenelse{\w=4}{\node at (0.5*\w,-3) [label=above :] {$b$}}{};
\ifthenelse{\w=6}{\node at (0.5*\w,-3) [label=above :] {$a$}}{};

\end{scope}
\end{scope}
\end{scope}
}
\xdef\i{\i+7} 
}
\draw (0,0) -- (\i-7,0);
\end{scope}
}
\end{tikzpicture}

\caption{Three trees $T_1$, $T_2$, and $T_3$.}\label{fig1}
\end{figure}
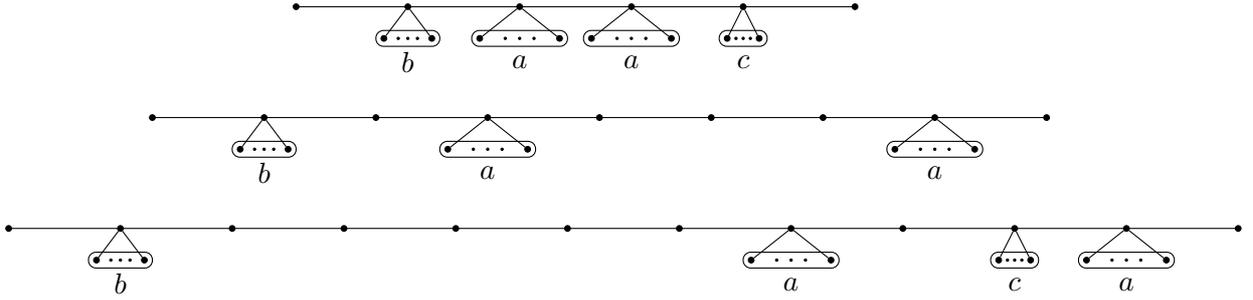

It is easy to verify that 
\begin{itemize}
\item $s(T_1,T_2)\simeq T(0,b,a,a,c,0,0,a,0)$ does not contain $T(0,a,0,0,a,0)$,
\item $s(T_1,T_3)\in \{ T(0,b,0,0,0,0,b,a,a,c,a,0),T(0,b,0,0,0,0,c,a,a,b,a,0)\}$ and does not contain $T(0,a,0,0,0,a,0)$, and
\item $s(T_2,T_3)\simeq T(0,b,0,a,0,0,0,a,0,c,a,0)$ does not contain $T(0,a,a,0)$.
\end{itemize}
It follows that all three trees
$s(s(T_1,T_2),T_3)$,
$s(s(T_1,T_3),T_2)$, and
$s(s(T_2,T_3),T_1)$
have order at least $4a$, while the tree 
$T(0,b,0,0,b,b,a,a,c,c,a,0)$
of order $3a+3b+2c+12$ 
shown in Figure \ref{fig3} 
contains $T_1$, $T_2$, and $T_3$.

\begin{figure}[htb!]
    \centering
    \begin{tikzpicture} [scale=0.21]
    \tikzstyle{point}=[draw,circle,inner sep=0.cm, minimum size=0.75mm, fill=black]
    \tikzstyle{point2}=[draw,circle,inner sep=0.cm, minimum size=0.3mm, fill=black]

\xdef\j{0} 
\xdef\k{0}

\foreach \arr in {{0,4,0,0,4,4,6,6,3,3,6,0}}{

\xdef\j{\j-7} 
\xdef\k{\k-9} 

\begin{scope}[shift={(\k,\j)}]

\xdef\i{0} 

\foreach \w in \arr{

\ifthenelse{\w=0}{
    \node[point] at (\i,0) [label=above :] {};
}{

\begin{scope}[shift={(\i,0)}]
    
\begin{scope}[shift={(-0.5*\w,0)}]

\node[point] (c) at (0.5*\w,0) [label=above :] {};
\begin{scope}[shift={(0,-0.5)}] 
\draw [rounded corners, rounded corners=1mm] (0,-1)--(\w,-1)--(\w,-2)-- (0,-2) -- cycle;
\node[point] (a) at (0.5,-1.5) [label=above :] {};
\node[point] (b) at (\w-0.5,-1.5) [label=above :] {};
\draw (a) -- (c) -- (b);
\foreach \i in {0.35*\w,0.5*\w,0.65*\w}{
    \node[point2] at (\i,-1.5) [label=above :] {};
}
\ifthenelse{\w=3}{\node at (0.5*\w,-3) [label=above :] {$c$}}{};
\ifthenelse{\w=4}{\node at (0.5*\w,-3) [label=above :] {$b$}}{};
\ifthenelse{\w=6}{\node at (0.5*\w,-3) [label=above :] {$a$}}{};

\end{scope}
\end{scope}
\end{scope}
}
\xdef\i{\i+7} 
}
\draw (0,0) -- (\i-7,0);
\end{scope}
}
\end{tikzpicture}
\caption{A supertree for $\{ T_1,T_2,T_3\}$.}\label{fig3}
\end{figure}
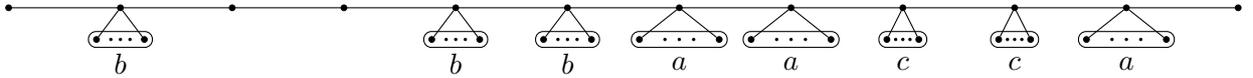
Choosing $a$ large shows that the factor $4/3$ can not be improved.
Note that every supertree of $\{ T_1,T_2\}$ that contains $T(0,a,0,0,a,0)$
and can therefore accomodate $T_3$ in a more efficient way
has at least $b$ vertices more than $s(T_1,T_2)$.
Choosing $a$, $b$, and $c$
such that $4\epsilon a\geq b>\epsilon n(s(T_1,T_2))=\epsilon(3a+b+c+9)$
shows that the factor $4/3$ can only be improved marginally
if the subroutine for $s(\cdot,\cdot)$ is allowed to return slightly suboptimal trees.

\pagebreak

For the proof of Theorem \ref{theorem4}, we need an auxiliary statement.

Let $\Delta\in \mathbb{N}$.
Let 
${\cal T}_{\Delta}=\{ T_1,\ldots,T_q\}$
be the set of all trees of order at most $\Delta$.
Let ${\cal F}_{\Delta}$ be the set of all forests whose components belong to ${\cal T}_{\Delta}$.
For a forest $F$, let 
$t(F)=(t_1,\ldots,t_q)\in [n(F)]_0^q$
be such that
$t_i$ is the number of components of $F$ that are isomorphic to $T_i$ for every $i\in [q]$ and
let
$$\hat{t}(F)=\{ t(F'):\mbox{$F'$ is an induced subforest of $F$}\}.$$
For every $t=(t_1,\ldots,t_q)$ from $\hat{t}(F)$, 
an induced subforest $F'$ of $F$ with $t(F')=t$
is a {\it realizer of $t$ within $F$}.
Note that $t(F)$ counts only small components of $F$ 
but that $F$ may have large components.
Note furthermore, that $\hat{t}(F)\subseteq [n(F)]^{q}_0$.

For two sets $A,B\in \mathbb{N}_0^q$, let $A\oplus B=\{ a+b:a\in A\mbox{ and }b\in B\}$.

\begin{lemma}\label{lemma1}
For every $\Delta\in \mathbb{N}$, there is some $p\in \mathbb{N}$ with the following property:
For every forest $F$ of order at most $n$, 
$\hat{t}(F)$ as well as realizers within $F$ 
can be determined in time $O(n^p)$.
\end{lemma}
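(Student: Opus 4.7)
The plan is to compute $\hat{t}(F)$ by a rooted-tree dynamic programming that processes each component of $F$ independently and then combines the per-component results using the $\oplus$-operation. The crucial observation keeping the state space polynomial is that $t(\cdot)$ only records components of order at most $\Delta$: hence, for a partially constructed induced subforest, we need only remember the rooted isomorphism type of the component currently containing the DP-root vertex while that type has at most $\Delta$ vertices, and we may collapse all larger types to a single bookkeeping symbol $\bot$, since a component with more than $\Delta$ vertices can never be isomorphic to any $T_i\in {\cal T}_\Delta$ no matter how it grows later.

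I root every component of $F$ arbitrarily, and for each vertex $v$ with rooted subtree $T_v$ I define
\[
A(v)=\bigl\{(S,t) : \mbox{some induced subforest $F'\subseteq T_v$ realizes } (S,t)\bigr\},
\]
where $S\in\{\emptyset,\bot\}\cup {\cal R}_\Delta$ and ${\cal R}_\Delta$ is the constant-size set of rooted isomorphism types of trees of order at most $\Delta$; the pair $(S,t)$ is realized by $F'$ exactly if either $S=\emptyset$ and $v\notin V(F')$, or $S\in {\cal R}_\Delta$ is the rooted shape (with root $v$) of the component of $F'$ containing $v$, or $S=\bot$ and that component has more than $\Delta$ vertices; in the latter two cases $t\in[n]_0^q$ records the small-component multiplicities of $F'$ excluding the component through $v$. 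Since $|{\cal R}_\Delta|$ depends only on $\Delta$, we have $|A(v)|=O(n^q)$.

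I compute $A(v)$ from the children's tables by integrating them one at a time into an initial working table $\{(\emptyset,0),(R_v,0)\}$, where $R_v$ is the trivial rooted tree consisting of $v$ alone. When merging with the table of a child $c$, I pair $(S_v,t_v)$ with $(S_c,t_c)$ and produce a new entry according to exhaustive cases: if $S_v=\emptyset$, the component of $F'$ through $c$ (if any) is complete, so the new pair is $(\emptyset,t_v+t_c)$, with one unit added to the coordinate of the unrooted isomorphism type of $S_c$ whenever $S_c\in {\cal R}_\Delta$; if $v$ is selected and $S_c=\emptyset$, the shape at $v$ is unchanged; if $v$ and $c$ are both selected and $S_v,S_c\in {\cal R}_\Delta$, the merged entry has shape equal to the rooted tree obtained by attaching $S_c$ below $v$ provided its order is at most $\Delta$, and shape $\bot$ otherwise; any merge involving a $\bot$ on the selected side stays $\bot$. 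After all children of a component root $r$ have been integrated, I turn each $(S,t)\in A(r)$ into the final vector $t+e_{[S]}$ when $S\in {\cal R}_\Delta$ and $t$ otherwise, obtaining $\hat{t}$ for that tree; the overall $\hat{t}(F)$ is then the $\oplus$ over all component trees.

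The running-time bound follows by counting: each merge costs $O(n^{2q})$, there are $\sum_v\deg(v)=O(n)$ merges in total, and each of the at most $n$ inter-component $\oplus$-combinations likewise costs $O(n^{2q})$, so the total time is $O(n^{2q+1})$ and $p=2q+1$ suffices. Realizers are produced in the standard way by storing, for every DP entry, back-pointers recording which merge case and which predecessor entries created it, and then unwinding them for any target $t\in\hat{t}(F)$. The main subtlety is that the case distinction in the merge step be exhaustive and the collapse to $\bot$ sound; soundness is guaranteed by the observation at the start of the proof, and exhaustiveness is a routine check.
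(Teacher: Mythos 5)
Your proof is correct and follows essentially the same rooted-tree dynamic program as the paper: both track, per vertex $u$, a tally vector over ${\cal T}_\Delta$ together with the rooted isomorphism type of the partial component through $u$, reduce to trees via $\oplus$ over components, and rely on the constant bound $|{\cal T}_\Delta|$ to keep the state space polynomial. The differences are implementation-level rather than conceptual: you merge child tables one at a time and keep an explicit overflow symbol $\bot$ for partial components that already exceed $\Delta$ vertices, while the paper combines all children in one step via injective maps $f\colon [d']\to [d]$ and never needs $\bot$, because a subforest whose component through $u$ is large contributes to $\hat{t}(F_u)$ via $\hat{t}_{\emptyset}(F_u)$ (one may simply delete that oversized component without changing the tally). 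You are also more careful than the paper's notation in separating the tally of the component through $u$ from the rest and adding $e_{[S]}$ only at finalization; this is the cleaner bookkeeping, and it makes the base-case and merge-case arithmetic line up without the slight imprecision present in the paper's definition of $\hat{t}_{(S,s)}$.
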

\begin{proof}
If $F$ has components $F_1,\ldots,F_k$, then 
$\hat{t}(F)=\bigoplus\limits_{i=1}^k\hat{t}(F_i)$.
Since $\hat{t}(F_i)\subseteq [n(F_i)]_0^q$ and $\oplus$ is associative,
in order to show the desired statement, we may assume that $F$ is a tree.
Root $F$ is some vertex $r$.
Let ${\cal R}$ be the set of all pairs $(S,s)$ such that $S\in {\cal T}_{\Delta}$ and $s\in V(S)$,
that is, ${\cal R}$ captures all possible ways of selecting root vertices for the trees in ${\cal T}_{\Delta}$.
Let $u$ be some vertex of $F$.
Let $F_u$ be the subtree of $F$ rooted in $u$ that contains $u$ and all its descendants.

For every $(S,s)\in {\cal R}$, 
let $\hat{t}_{(S,s)}(F_u)$ be the set of all $(t_1,\ldots,t_q)\in [n(F_u)]_0^q$ such that 
\begin{itemize}
\item $F_u$ has an induced subforest $K$ that consists of $t_i$ disjoint copies of $T_i$ for every $i\in [q]$,
\item the vertex $u$ is contained in some component $L$ of $K$ that is isomorphic to $S$, and 
\item some isomorphism $\pi$ between $S$ and $L$ maps $s$ to $u$.
\end{itemize}
See Figure \ref{fign1} for an illustration.

Note that $\hat{t}_{(S,s)}(F_u)$ is empty, 
if $F_u$ does not contain a suitable copy of $S$.
\begin{figure}[H]
    \centering
    \begin{tikzpicture} [scale=0.45]
    \tikzstyle{point}=[draw,circle,inner sep=0.cm, minimum size=1mm, fill=black]
    \tikzstyle{point2}=[draw,circle,inner sep=0.cm, minimum size=0.5mm, fill=black]

    \tikzstyle{line1}=[line width=0.4mm]
    \tikzstyle{line2}=[thin]

\begin{scope}[shift={(-5,0)}]
     \draw [rounded corners, rounded corners=5mm] (-1.12,-2)--(5,8.6)--(11.12,-2)--cycle;    
\node at (5,7.75) [label=above :$F_u$] {};

\node[point] (v3) at (5,7.5) [label=above :] {};
\node at (5.25,7.5) [label=right :$u$] {};
\node[point] (v3r) at (5,13) [label=above :] {};
\node at (5,13) [label=right :$r$] {};

\end{scope}

\begin{scope}[shift={(2,-1)}]
    \xdef\www{16*0.4} 
     \xdef\hhh{13.9*0.4}     
    \begin{scope}[shift={(0,-0.5)}]
    \draw [rounded corners, rounded corners=10mm] (-\www/2,0)--(0,\hhh)--(\www/2,0)--cycle;
    \end{scope}
\node at (0,0.75) [label=above :$ \cong T_{k}$] {};
\end{scope}

\begin{scope}[shift={(-2.8,-1)}]
    \xdef\www{16*0.3} 
     \xdef\hhh{13.9*0.3}     
    \begin{scope}[shift={(0,-0.5)}]
    \draw [rounded corners, rounded corners=7.5mm] (-\www/2,0)--(0,\hhh)--(\www/2,0)--cycle;
    \end{scope}
\node at (0,0) [label=above :$ \cong T_i$] {};
\end{scope}

\begin{scope}[shift={(-0.75,4)}]
    \xdef\www{16*0.25} 
     \xdef\hhh{13.9*0.25}     
    \begin{scope}[shift={(0,-0.5)}]
    \draw [rounded corners, rounded corners=7.5mm] (-\www/2,0)--(0,-\hhh)--(\www/2,0)--cycle;
    \end{scope}
\node at (0,-2.75) [label=above :$ \cong T_j$] {};
\end{scope}

     \xdef\www{16*0.33} 
     \xdef\hhh{13.9*0.33} 
    
    \begin{scope}[shift={(0,10.6-\hhh)}]
    \draw [rounded corners, rounded corners=5mm] (-\www/2,-2)--(0,\hhh-2)--(\www/2,-2)--cycle;
    \end{scope}
\node[align=center] at (0,5.5) {$S\overset{\pi}{\cong} L$\\$\pi(s)=u$};

    \begin{scope}[shift={(0,0)}]
 \xdef\www{16*1.25} 
     \xdef\hhh{13.9*1.25}     
 \node at (1.5,\hhh-4) [label=right:$F$] {};        
    \draw [rounded corners, rounded corners=10mm] (-\www/2,-2)--(0,\hhh-2)--(\www/2,-2)--cycle;
    \end{scope}

\end{tikzpicture}

\caption{The structure of induced subgraphs $K$ of $F_u$ contributing to $\hat{t}_{(S,s)}(F_u)$.}\label{fign1}
\end{figure}
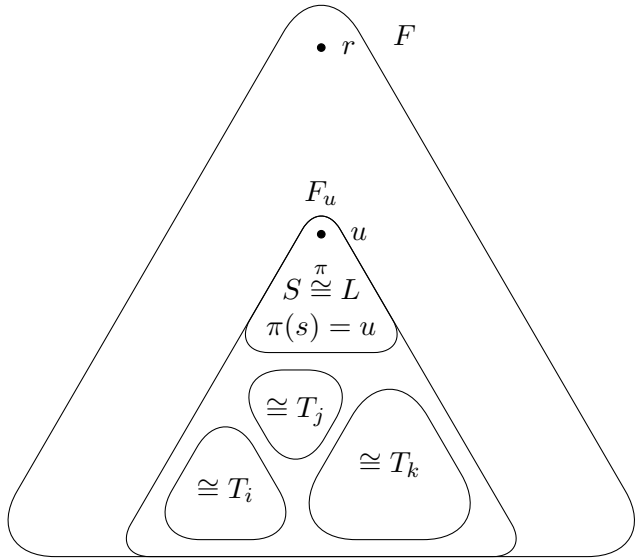

Similarly, 
let $\hat{t}_{\emptyset}(F_u)$ be the set of all $(t_1,\ldots,t_q)\in [n(F_u)]_0^q$ such that 
\begin{itemize}
\item $F_u$ has an induced subforest $K$ that consists of $t_i$ disjoint copies of $T_i$ for every $i\in [q]$ and
\item the vertex $u$ does not belong to $K$.
\end{itemize}
Clearly, 
$$\hat{t}(F_u)=\hat{t}_{\emptyset}(F_u)\cup \bigcup\limits_{(S,s)\in {\cal R}}\hat{t}_{(S,s)}(F_u).$$
Let $u$ have the children $v_1,\ldots,v_d$ in $F$.

We have 
\begin{align*}
\hat{t}_{\emptyset}(F_u) & = \hat{t}(F_{v_1})\oplus \hat{t}(F_{v_2})\oplus \cdots \oplus \hat{t}(F_{v_d}).
\end{align*}
Now, let $(S,s)\in {\cal R}$. Let $s_1,\ldots,s_{d'}$ be the neighbors of $s$ in $S$
and let $S_i$ be the component of $S-s$ containing $s_i$.
Since $S$ has order at most $\Delta$, we have $d'<\Delta$.

Furthermore, we have 
\begin{align*}
\hat{t}_{(S,s)}(F_u) & = 
\bigcup\limits_{f:[d']\xrightarrow{injective}[d]}
\left(\bigoplus\limits_{i\in [d']}\hat{t}_{(S_i,s_i)}(F_{v_{f(i)}})\,\,
\oplus
\bigoplus\limits_{i\in [d]\setminus f([d'])}\hat{t}_{\emptyset}(F_{v_i})\right),
\end{align*}
where the $O(d^\Delta)$ injective functions $f$ capture the different ways of 
associating the neighbors of $s$ in $S$ with the children of $u$ in $F$.
See Figure \ref{fign2} for an illustration.

\begin{figure}[htb!]
    \centering
    \begin{tikzpicture} [scale=0.3]
    \tikzstyle{point}=[draw,circle,inner sep=0.cm, minimum size=1mm, fill=black]
    \tikzstyle{point2}=[draw,circle,inner sep=0.cm, minimum size=0.5mm, fill=black]

    \tikzstyle{line1}=[line width=0.5mm]

    \tikzstyle{point3}=[draw,circle,inner sep=0.cm, minimum size=1.5mm, fill=none]

\xdef\www{5} 
\xdef\hhh{13.9*0.5}
\xdef\w{\www} 
\xdef\www{\www*0.75/0.5} 

\coordinate (s) at (3.5*\www,13) [label=above:\text{$\pi(s)=u$}] {};
\node[point] at (3.5*\www,13) [label=above:\text{$\pi(s)=u$}] {};
    
\foreach \i in {1,2}{
    \xdef\hhh{13.9*0.5}
    \xdef\h{13.9*0.75}

    \begin{scope}[shift={(\www*\i,0)}]
    \ifthenelse{\i=1}{
        \coordinate (si) at (0,\hhh-1.75) {};
        \node[point] at (0,\hhh-1.75) [label=left:\text{$\pi(s_1)=v_{f(\i)}\text{ }$}] {};
    }{
        \coordinate (si) at (0,\hhh-1.75) {};
        \node[point] at (0,\hhh-1.75) [label=left:$v_{f(\i)}\text{ }$] {};
    }
        \draw[line1] (s) -- (si);
        \node[align=center] at (0,\hhh-1.75-2.5) {$\cong$};
        \node[align=center] at (0,\hhh-1.75-4) {$S_{\i}$};

    \draw [rounded corners=5mm] (-\www/2,\hhh-\h)--(0,\hhh)--(\www/2,\hhh-\h)--cycle;

    \draw [line1, rounded corners=5mm] (-\w/2,0)--(0,\hhh)--(\w/2,0)--cycle;

\node at (0,0.25*\h-4-4) [label=left:]{$F_{v_{f(\i)}}$};
 
 \end{scope}
}

\foreach \i in {3.5}{
    \xdef\hhh{13.9*0.5}
    \xdef\h{13.9*0.75}

    \begin{scope}[shift={(\www*\i,0)}]
        \coordinate (si) at (0,\hhh-1.75) {};
        \node[point] at (0,\hhh-1.75) [label=left:$v_{f(d')}\text{ }$] {};

        \node[align=center] at (0,\hhh-1.75-2.5) {$\cong$};
        \node[align=center] at (0,\hhh-1.75-4) {$S_{d'}$};
            
        \draw[line1] (s) -- (si);
    \draw [rounded corners, rounded corners=5mm] (-\www/2,\hhh-\h)--(0,\hhh)--(\www/2,\hhh-\h)--cycle;

    \draw [line1,rounded corners=5mm] (-\w/2,0)--(0,\hhh)--(\w/2,0)--cycle;
        \node at (0,0.25*\h-4-4) [label=left:]{$F_{v_{f(d')}}$};
    \end{scope}
}

\node[align=center] at (23.5,9) {$\overset{\bm{\pi}}{\bm{\cong} }\mathbf{S}$};

\foreach \i in {-1,0,1}{
    \node[point2] at (\i+2.75*\www-1.5,\hhh-1.75) {};
}

\foreach \i in {-1,0,1}{
        \node[point2] at (\i+5.25*\www,\hhh-1.75) {};
}

\foreach \i in {4.5,6}{
    \xdef\hhh{13.9*0.5}
    \xdef\h{13.9*0.75}
    \begin{scope}[shift={(\www*\i,0)}]
        \node[point3] (si) at (0,\hhh-1.75) [label=below:] {};
        \draw (s) -- (si);
        \draw [rounded corners, rounded corners=5mm] (-\www/2,\hhh-\h)--(0,\hhh)--(\www/2,\hhh-\h)--cycle;
        \node at (0,0.33*\h-4) [label=left:]{};
    \end{scope}
}

\draw[thick,black,decorate,decoration={brace,amplitude=10}] (6.5*\www,-4.5) -- (4*\www,-4.5) node[midway, below,yshift=-10]{$ \bigcup\limits_{i\in [d] \setminus f([d'])} F_{v_i}$};    

\end{tikzpicture}

\caption{Embedding $S$ (as well as the rest of $K$) into $F_u$ 
mapping the root $s$ of $S$ to $u$
and the children $s_i$ of $s$ in $S$ to children $v_{f(i)}$ of $u$
as selected by $f$.}\label{fign2}
\end{figure}
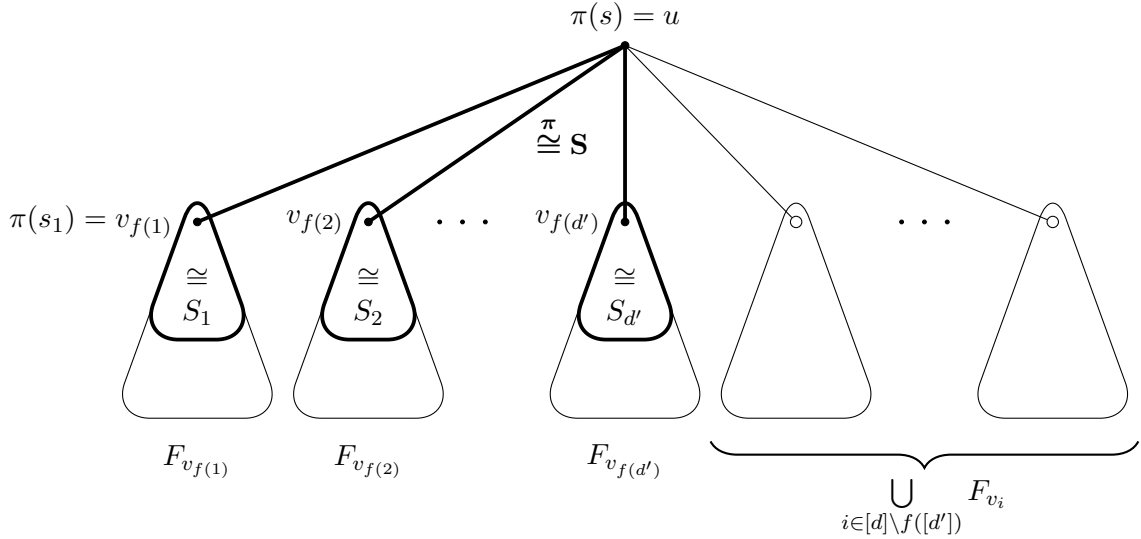
Using these formulas, a simple dynamic programming approach allows to determine
$\hat{t}(F)$ as well as suitable realizers within $F$ in time $O(n^p)$.
\end{proof}

\pagebreak

\begin{proof}[Proof of Theorem \ref{theorem4}]
Let $\epsilon>0$ be fixed.
Let ${\cal F}=\{ F_1,\ldots,F_k\}$ be a given set of $k$ forests of order at most $n$.
For $i\in [k]$, let $n_i=n(F_i)$, and let $n_1=\min\{ n_1,\ldots,n_k\}$.
Let $F_{\rm opt}$ be some maximum subforest of ${\cal F}$.
Since each forest $F_i$ in ${\cal F}$ has an independent set of order at least $n_i/2\geq n_1/2$,
we have $n(F_{\rm opt})\geq n_1/2$.

Let $\Delta=\left\lceil\frac{2}{\epsilon}\right\rceil$.
Let ${\cal F}_{\Delta}$ be as above, that is, 
${\cal F}_{\Delta}$ is the set of all forests 
whose components all have order at most $\Delta$.
Rooting each component of $F_1$ in some vertex and 
iteratively removing vertices $u$ of maximum depth 
for which $u$ has at least $\Delta$ descendants, 
yields a set $X$ of at most $n_1/\Delta\leq 2n(F_{\rm opt})/\Delta$ vertices of $F_1$ 
such that $F'_1=F_1-X$ belongs to ${\cal F}_{\Delta}$.
Let $F'_{\rm opt}$ be a maximum subforest of $({\cal F}\setminus \{ F_1\})\cup \{ F'_1\}$.
Clearly, $F'_{\rm opt}$ is a subforest of ${\cal F}$ that belongs to ${\cal F}_{\Delta}$ and satisfies
\begin{align*}
n(F'_{\rm opt})
\geq n(F_{\rm opt})-|X|
\geq \left(1-\frac{2}{\Delta}\right)n(F_{\rm opt})
\geq (1-\epsilon)n(F_{\rm opt}).
\end{align*}
Therefore, in order to complete the proof, 
it suffices to show that a subforest of ${\cal F}$ that belongs to ${\cal F}_{\Delta}$
and has maximum possible order subject to this condition,
can be found efficiently.

By Lemma \ref{lemma1}, we can determine $\hat{t}(F_i)$ 
as well as suitable realizers within $F$ 
in time $O(n^{p_1})$ for every $i\in [k]$.
Since 
\begin{align*}
\max\{ n(F):F\in{\cal F}_{\Delta}\mbox{ is a subforest of }{\cal F}\}
&=\max\left\{
\sum\limits_{i=1}^q t_in(T_i):
(t_1,\ldots,t_q)\in \bigcap\limits_{i=1}^k\hat{t}(F_i)\right\},
\end{align*}
the desired statement follows.
\end{proof}
It seems interesting to study tradeoffs between supergraphs 
that are required to belong to different graph classes. 
For a set ${\cal F}$ of trees, for instance, 
a supergraph of minimum order may be much smaller than a minimum supertree. 
Indeed, 
if ${\cal F}=\{ T(a,0,a),T(a,0,0,a),\ldots,T(a,\underbrace{0,\ldots,0}_{k\,\,times},a)\}$
for positive integers $a$ and $k$ at least $3$,
then suitably identifying vertices of degree $a+1$ yields a supergraph of ${\cal F}$
of order $2+2a+1+2+\ldots+k=2a+{k+1\choose 2}+2$,
while every supertree of ${\cal F}$ has order $\Omega\left(\sqrt{k}a\right)$.

\end{document}